\documentclass[a4paper,UKenglish,cleveref, autoref, thm-restate,svgnames,dvipsnames]{lipics-v2021}
\pdfoutput=1

%This is a template for producing LIPIcs articles. 
%See lipics-v2021-authors-guidelines.pdf for further information.
%for A4 paper format use option "a4paper", for US-letter use option "letterpaper"
%for british hyphenation rules use option "UKenglish", for american hyphenation rules use option "USenglish"
%for section-numbered lemmas etc., use "numberwithinsect"
%for enabling cleveref support, use "cleveref"
%for enabling autoref support, use "autoref"
%for anonymousing the authors (e.g. for double-blind review), add "anonymous"
%for enabling thm-restate support, use "thm-restate"
%for enabling a two-column layout for the author/affilation part (only applicable for > 6 authors), use "authorcolumns"
%for producing a PDF according the PDF/A standard, add "pdfa"

%\pdfoutput=1 %uncomment to ensure pdflatex processing (mandatatory e.g. to submit to arXiv)

\hideLIPIcs  %uncomment to remove references to LIPIcs series (logo, DOI, ...), e.g. when preparing a pre-final version to be uploaded to arXiv or another public repository

%\graphicspath{{./graphics/}}%helpful if your graphic files are in another directory

\bibliographystyle{plainurl}% the mandatory bibstyle

\title{On a Characterization of Spartan Graphs} %TODO Please add

%\titlerunning{Dummy short title} %TODO optional, please use if title is longer than one line

%TODO mandatory, please use full name; only 1 author per \author macro; first two parameters are mandatory, other parameters can be empty. Please provide at least the name of the affiliation and the country. The full address is optional. Use additional curly braces to indicate the correct name splitting when the last name consists of multiple name parts.

\author{Neeldhara Misra}{IIT Gandhinagar, India}{neeldhara.m@iitgn.ac.in }{}{Supported by the SERB Early Career Researcher Grant ECR/2018/002967}

\author{Saraswati Girish Nanoti}{IIT Gandhinagar, India}{nanoti_saraswati@iitgn.ac.in}{}{Supported by CSIR}

\authorrunning{N. Misra, and S. Nanoti} %TODO mandatory. First: Use abbreviated first/middle names. Second (only in severe cases): Use first author plus 'et al.'

\Copyright{} %TODO mandatory, please use full first names. LIPIcs license is "CC-BY";  http://creativecommons.org/licenses/by/3.0/
    
\ccsdesc[500]{Theory of computation~Design and analysis of algorithms}

\keywords{Eternal Vertex Cover  \and Vertex Cover \and König Graphs \and Spartan Graphs \and Matchings} %TODO mandatory; please add comma-separated list of keywords

\category{} %optional, e.g. invited paper

\relatedversion{} %optional, e.g. full version hosted on arXiv, HAL, or other respository/website
%\relatedversiondetails[linktext={opt. text shown instead of the URL}, cite=DBLP:books/mk/GrayR93]{Classification (e.g. Full Version, Extended Version, Previous Version}{URL to related version} %linktext and cite are optional

%\supplement{}%optional, e.g. related research data, source code, ... hosted on a repository like zenodo, figshare, GitHub, ...
%\supplementdetails[linktext={opt. text shown instead of the URL}, cite=DBLP:books/mk/GrayR93, subcategory={Description, Subcategory}, swhid={Software Heritage Identifier}]{General Classification (e.g. Software, Dataset, Model, ...)}{URL to related version} %linktext, cite, and subcategory are optional

%\funding{(Optional) general funding statement \dots}%optional, to capture a funding statement, which applies to all authors. Please enter author specific funding statements as fifth argument of the \author macro.

\acknowledgements{}%optional

\nolinenumbers %uncomment to disable line numbering

%Editor-only macros:: begin (do not touch as author)%%%%%%%%%%%%%%%%%%%%%%%%%%%%%%%%%%
\EventEditors{John Q. Open and Joan R. Access}
\EventNoEds{2}
\EventLongTitle{42nd Conference on Very Important Topics (CVIT 2016)}
\EventShortTitle{CVIT 2016}
\EventAcronym{CVIT}
\EventYear{2016}
\EventDate{December 24--27, 2016}
\EventLocation{Little Whinging, United Kingdom}
\EventLogo{}
\SeriesVolume{42}
\ArticleNo{23}
%%%%%%%%%%%%%%%%%%%%%%%%%%%%%%%%%%%%%%%%%%%%%%%%%%%%%%

\usepackage{charter,eulervm,parskip,tikz,float,nicefrac,framed}
\usepackage{xcolor,graphicx}
\usetikzlibrary{shapes.geometric, calc, positioning, decorations.pathreplacing}
\usetikzlibrary{decorations.shapes,decorations.pathreplacing,decorations.pathmorphing}
\usetikzlibrary{arrows,matrix,shapes}
\usepackage{tcolorbox}
\tikzset{
    small circles/.style={circle,inner sep=2pt,fill=#1},
    hollow circles/.style n args={2}{circle,inner sep=#1,draw=#2,thick},
    stars/.style={star,inner sep=2pt}
}
\definecolor{DarkGreen}{rgb}{0.1,0.5,0.1}
\hypersetup{
colorlinks=true,
linkcolor=IndianRed,%color of internal links
urlcolor=DarkGreen,%color of external links
citecolor=SeaGreen%color of links to bibliography
}

% Paper-specific macros

\begin{document}

\maketitle

%TODO mandatory: add short abstract of the document
\begin{abstract}
The eternal vertex cover game is played between an attacker and a defender on an undirected graph $G$. The defender identifies $k$ vertices to position guards on to begin with. The attacker, on their turn, attacks an edge $e$, and the defender must move a guard along $e$ to defend the attack. The defender may move other guards as well, under the constraint that every guard moves at most once and to a neighboring vertex. The smallest number of guards required to defend attacks forever is called the eternal vertex cover number of $G$, denoted $\mathsf{evc}(G)$.

For any graph $G$, $\mathsf{evc}(G)$ is at least the vertex cover number of $G$, denoted $\mathsf{mvc}(G)$. A graph is Spartan if $\mathsf{evc}(G) = \mathsf{mvc}(G)$. It is known that a bipartite graph is Spartan if and only if every edge belongs to a perfect matching. We show that the only König graphs that are Spartan are the bipartite Spartan graphs. We also give new lower bounds for $\mathsf{evc}(G)$, generalizing a known lower bound based on cut vertices. We finally show a new matching-based characterization of all Spartan graphs.
\end{abstract}
\section{Introduction}\label{sec:introduction}
Recall that a subset of vertices $S$ is called a \emph{vertex cover} if, for every edge $e = \{u,v\}$, at least one of $u$ or $v$ belong to $S$. The eternal vertex cover game, introduced by Klostermeyer and Mynhardt~\cite{KM09}, is a turn-based two player game played on a graph between players typically referred to as the attacker and the defender. The defender is tasked with placing guards on the vertices of the graph to protect against attacks on edges, while the attacker selects edges to attack. The defender's goal is to ensure that every attack can be countered by moving a guard along the attacked edge, while optionally moving additional guards, under the constraint that guards move at most once, and to a neighboring vertex. When the defender is able to defend attacks forever, then the positions of the guards at every stage of the game form a vertex cover. 

The minimum number of guards required for the defender to successfully defend against an infinite sequence of attacks is known as the \emph{eternal vertex cover number}, denoted as $\mathsf{evc}(G)$. We also use $\mathsf{mvc}(G)$ to denote the \emph{minimum} vertex cover number of $G$, which is the size of a smallest vertex cover of $G$. It is well known that~\cite{KM09}:

$$\underbrace{\mathsf{mvc}(G) \leqslant \mathsf{evc}(G)}_{\text{The placement of guards\\ is a vertex cover.}} \text{ and } \underbrace{\mathsf{evc}(G) \leqslant 2\cdot \mathsf{mvc}(G)}_{\text{Guard both endpoints of a maximum matching.}}$$

A natural question is to identify the graphs for which the lower bound is tight, i.e, when $\mathsf{evc}(G) = \mathsf{mvc}(G)$. Such graphs are called \emph{Spartan} --- here, the defender can maintain a vertex cover with the minimum number of guards, making them an optimal solution in terms of resource usage. We are interested in the question of what such graphs look like from a structural perspective. 

In~\cite{BCFPRW2021}, Babu et al achieve a characterization for a class of graphs that includes chordal graphs and internally triangulated planar graphs. We briefly recall this result. Let the graph class $\mathcal{F}$ denote the class of all connected graphs $G$ for which each minimum vertex cover of $G$ that contains all the cut vertices of $G$ induces a connected subgraph in $G$. (A cut vertex is a vertex whose removal disconnects the graph.) Let $G(V,E)$ be a graph that belongs to $\mathcal{F}$, with at least two vertices, and $X\subset V$ be the set of cut vertices of $G$. Then it is shown~\cite{BCFPRW2021} that $G$ is Spartan if and only if for every vertex $v\in V\backslash X$, there exists a minimum vertex cover $S_v$ of $G$ such that $X\cup \{v\}\subset S_v$. It is also known that bipartite graphs are Spartan if and only if every edge belongs to a perfect matching~\cite{MN23}. Our first result involves expanding the scope of this characterization to König graphs, which are graphs where the minimum vertex cover equals the maximum matching, and hence a natural generalization of bipartite graphs. We show that the only König graphs that are Spartan are those that are also bipartite and satisfy the conditions for being Spartan within bipartite graphs:

\begin{restatable}{theorem}{konigspartan}
    A K\"onig graph $G$ is Spartan if and only if it is bipartite and essentially elementary.
\end{restatable}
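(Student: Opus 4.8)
The plan is to prove the two implications separately; the backward one is immediate from known results, and all the work is in the forward one.

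For the backward direction, if $G$ is bipartite and essentially elementary then $G$ is K\"onig by K\"onig's theorem, and $G$ is Spartan by the characterisation of Spartan bipartite graphs of~\cite{MN23} (``essentially elementary'' being exactly the matching condition appearing there). For the forward direction I would prove the contrapositive: if $G$ is K\"onig and \emph{not} bipartite, then $\mathsf{evc}(G)>\mathsf{mvc}(G)$. Granting this, any Spartan K\"onig graph is bipartite, and then it is essentially elementary, again by~\cite{MN23}.

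So let $G$ be K\"onig and not bipartite. I would first reduce to the case that $G$ is connected with no isolated vertices, using that $\mathsf{evc}$ and $\mathsf{mvc}$ are additive over connected components, that a graph is K\"onig iff all its components are, and that isolated vertices contribute nothing to any of these; at least one component of $G$ is non-bipartite. Fix a minimum vertex cover $S$. Since $G$ is K\"onig, $|S|$ equals the maximum matching number, so every maximum matching $M$ has, for each of its edges, exactly one endpoint in $S$; hence $M$ saturates all of $S$, matches $S$ bijectively into the independent set $I:=V(G)\setminus S$, and uses no edge inside $I$. As $G$ is not bipartite, $S$ is not independent (else $(S,I)$ would be a bipartition), so some edge $ss'$ has both ends in $S$; writing $st,s't'\in M$ for the matching edges at $s$ and $s'$, this yields the \emph{local obstruction}: a path on the four vertices $t,s,s',t'$ (in this order) whose two end edges lie in $M$ while its middle edge $ss'$ lies inside the vertex cover. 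Note also that $ss'$ lies in no maximum matching at all, since a minimum vertex cover of a K\"onig graph meets every edge of every maximum matching exactly once. (One could further invoke the Gallai--Edmonds decomposition together with the elementary fact that a factor-critical graph on three or more vertices has vertex-cover number strictly larger than its matching number to see that the K\"onig hypothesis already imposes a lot of bipartite-like structure, but the local obstruction is all the argument really needs.)

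It remains to upgrade this obstruction to the strict inequality $\mathsf{evc}(G)>\mathsf{mvc}(G)$, where I would lean on the new lower bounds for $\mathsf{evc}$ established above (the generalisation of the cut-vertex bound). Intuitively the attacker works in two phases: first attack the matching edges $st$ and $s't'$ so as to force guards onto $\{t,t'\}$ and thus off $\{s,s'\}$; then the $|S|-2$ pairwise disjoint matching edges meeting $S\setminus\{s,s'\}$ tie up $|S|-2$ of the $\mathsf{mvc}(G)=|S|$ guards, the two guards on $\{t,t'\}$ account for $st$ and $s't'$, and so no guard remains for $ss'$ --- the defender has lost. The main obstacle, and the reason I would route this through the new lower bound rather than through an ad hoc strategy, is the ``pinning'' phase: a K\"onig graph can have many minimum vertex covers and need not have any vertex common to all of them (a triangle with a pendant attached at each of its three vertices is such an example), so one cannot assume the guards stay on the fixed cover $S$. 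One must show the attacker can \emph{force} the guards into the bad configuration however they are shuffled, which needs a robust invariant --- that the number of guards on $S\setminus\{s,s'\}$ can be driven down and kept there, or that some guard becomes irrevocably trapped on a ``dead-end'' vertex (as the pendants do in the triangle-with-pendants examples, and as the single missing edge does in $K_4$ minus an edge). Producing such an invariant uniformly for pendant-heavy graphs and for $2$-connected K\"onig graphs at once is precisely what the flexible new $\mathsf{evc}$ lower bound is meant to supply.
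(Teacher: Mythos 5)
Your backward direction and the outer logical skeleton are fine (reduce to connected graphs, use the bipartite characterization for the bipartite case, observe that any maximum matching of a K\"onig graph saturates a minimum vertex cover $S$ with exactly one endpoint per edge in $S$, and that non-bipartiteness forces an edge $ss'$ inside $S$). But the heart of the forward direction --- showing that this situation actually forces $\mathsf{evc}(G)>\mathsf{mvc}(G)$ --- is not proved; it is only sketched and then explicitly deferred (``which is precisely what the flexible new $\mathsf{evc}$ lower bound is meant to supply''). You correctly identify the obstacle yourself: the defender is free to migrate among the (possibly many) minimum vertex covers, so the two-phase attack on $st$ and $s't'$ and the subsequent guard-counting argument do not go through without a robust invariant, and no such invariant is produced, nor is it shown which set $T$ of unoccupied vertices would be weakly/strongly bad, or which vertex would fail to lie in a weakly good minimum cover, so the new lower bounds are never actually applied. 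As stated, the forward direction is a plan, not a proof.

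For comparison, the paper never mounts an explicit attacker strategy against a non-bipartite K\"onig graph. Instead it uses Spartanness \emph{positively}, via the necessary conditions proved earlier: Lemma~\ref{matchI} gives a matching from the independent set $I=V(G)\setminus S$ into $S$ saturating $I$, which combined with the K\"onig property yields $|S|=|I|$ and a perfect matching; the auxiliary bipartite graph $H$ of cross edges must then be elementary, since otherwise Lemma~\ref{degone} produces a non-maximal independent set $T$ with $|N(T)|=|T|$, contradicting Corollary~\ref{nbdI} (itself a consequence of the weakly-good-cover bound); finally, an edge inside $S$ would make $S$ the unique minimum vertex cover, contradicting Lemma~\ref{vinVC}, so $G=H$ is bipartite and elementary. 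That chain is exactly the machinery your sketch gestures at but does not execute; filling your gap would essentially amount to reconstructing these lemmas and this argument (or something equivalent), so the proposal as it stands falls short of a proof of the theorem.
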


We also develop a series of increasingly stronger pre-requisites --- i.e, necessary conditions --- for a graph to be Spartan. To begin with, note that for a graph $G$ which has more than one vertex, if $\mathsf{evc}(G)=\mathsf{mvc}(G)$ then every vertex $v$ of $G$ must belong to a minimum sized vertex cover $S_v$. Indeed, if not, then from any configuration, attacking an edge incident on $v$ will result in the attacker winning. 
A result in~\cite{BCFPRW2021} takes this further to show that if $\mathsf{evc}(G)=\mathsf{mvc}(G)$ then each vertex must belong to a minimum sized vertex cover which contains all the cut vertices.

\begin{figure}
    \centering
    \begin{tikzpicture}[scale=0.8]
    \draw[Crimson,thick,rounded corners,fill=LightGray!25] (-0.6,10.5) rectangle (2.6,9.5);
    \draw[Crimson,thick,rounded corners,fill=LightGray!25] (7.4,10.5) rectangle (10.6,9.5);
     \draw[fill=Pink] (0,10) circle (0.25 cm);
         \node at (0,10){\small{$a$}};
     \draw[fill=Pink] (1,10) circle (0.25 cm);
         \node at (1,10){\small{$b$}};
     \draw[fill=Pink] (2,10) circle (0.25 cm);
         \node at (2,10){\small{$c$}};
     \draw[fill=Pink] (8,10) circle (0.25 cm);
         \node at (8,10){\small{$d$}};
     \draw[fill=Pink] (9,10) circle (0.25 cm);
         \node at (9,10){\small{$e$}};
     \draw[fill=Pink] (10,10) circle (0.25 cm);
         \node at (10,10){\small{$f$}};
     \draw[thick,DodgerBlue,fill=LightGray!25] (-1.5,8) circle (0.7 cm);
     \node at (-1.5,8){$C_1$};
     \node at (-0.3,8){$\cdots$};
     \draw[thick,DodgerBlue,fill=LightGray!25] (1,8) circle (0.9 cm);
     \node at (2.3,8){$\cdots$};
     \draw[thick,DodgerBlue,fill=LightGray!25] (3.5,8) circle (0.7 cm);
      \node at (3.5,8){$C_\ell$};

     \draw(1,8.5)--(1,9.75);
     \draw(1,8.5)--(0,9.75);
     \draw(0.7,7.8)--(0.9,8.1);
     \draw(1.3,7.8)--(1.1,8.1);
     
     \draw[dashed,thick] (-1.5,8.7)--(0.5,9.5);
     \draw[dashed,thick] (3.5,8.7)--(1.5,9.5);

     \draw[fill=Pink] (0.6,7.7) circle (0.25 cm);
      \node at (0.6,7.7){\small{$p$}};
     \draw[fill=Pink] (1.4,7.7) circle (0.25 cm);
      \node at (1.4,7.7){\small{$r$}};
     \draw[fill=Purple!60] (1,8.3) circle (0.25 cm);
     \node at (1,8.3){\small{$q$}};
    
     \draw[thick,DodgerBlue,fill=LightGray!25] (6.5,8) circle (0.7 cm);
     \node at (6.5,8){$C_1$};
     \node at (7.7,8){$\cdots$};
     \draw[thick,DodgerBlue,fill=LightGray!25] (9,8) circle (0.9 cm);
     \node at (10.3,8){$\cdots$};
     \draw[thick,DodgerBlue,fill=LightGray!25] (11.5,8) circle (0.7 cm);
     \node at (11.5,8){$C_\ell$};

     \draw(9,8.75)--(9,9.75);
     \draw(9,8.75)--(10,9.75);
     
     \draw[dashed,thick] (6.5,8.7)--(8.5,9.5);
     \draw[dashed,thick] (11.5,8.7)--(9.5,9.5);

     \draw(8.7,8.3)--(8.9,8.5);
     \draw(9.3,8.3)--(9.1,8.5);
     \draw(8.6,8.05)--(8.6,7.75);
     \draw(8.75,7.6)--(9.25,7.6);

     \draw[fill=Pink] (8.6,8.2) circle (0.15 cm);
     \node at (8.6,8.2){\tiny{w}};
     \draw[fill=Pink] (9.4,8.2) circle (0.15 cm);
     \node at (9.4,8.2){\tiny{y}};
     \draw[fill=Purple!60] (9,8.6) circle (0.15 cm);
     \node at (9,8.6){\tiny{x}};
     \draw[fill=Purple!60] (8.6,7.6) circle (0.15 cm);
      \node at (8.6,7.6){\tiny{v}};
     \draw[fill=Purple!60] (9.4,7.6) circle (0.15 cm);
      \node at (9.4,7.6){\tiny{u}};

    \end{tikzpicture}
    
    \caption{Examples of bad scenarios.}
    \label{fig:stronglygoodweaklygood}
\end{figure}
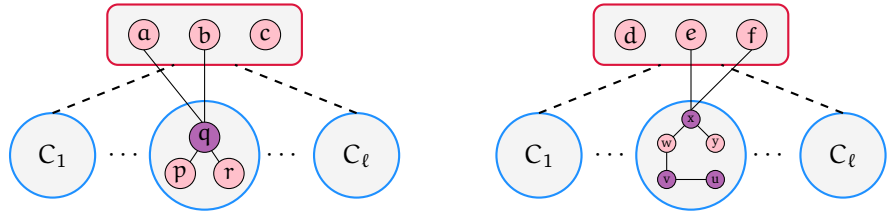

% Left: an example of a bad scenario with $T = \{a,b,c\}$. Note that the minimum vertex cover number of the component with the vertices $\{p,q,r\}$ is one and its intersection with the vertex cover of $G$ is also one. Right: an example of a bad scenario with $T = \{d,e,f\}$. Here, the minimum vertex cover number of the component with the vertices $\{u,v,w,x,y\}$ is two and the intersection of this component with the vertex cover of $G$ is $\{x,u,v\}$; however the set $\{u,v\}$ is not compatible with any minimum-sized vertex cover of the component.

We generalize this idea further in the following way. Let $S$ be a vertex cover of $G$ with independent set $I$. A ``bad situation'' for the defender who has their guards positioned on the vertices of $S$ is the following: suppose there is a $T\subseteq I=V(G)\setminus S$ such that one of the connected components $C$ of $G \setminus T$ is such that $|V(C)\cap S|=\mathsf{mvc}(C)$. This is because the attacker can attack an edge that has one endpoint in $C$ and the other in $T$, forcing a guard out of $C$, and it is easy to check that that the shortfall in $G[C]$ ``cannot be fixed'', creating a vulnerable edge that will lead to an attacker win (see~\Cref{fig:stronglygoodweaklygood} where the defender has one guard on the vertex $q$ with respect to the attack $qb$).

A vertex cover $S$ is \emph{weakly good} if there is no subset $T$ of $V(G) \setminus S$ for which the bad situation described here occurs. We show that a graph $G$ is Spartan only if for each $v\in V(G)$, there exists a minimum sized and weakly good vertex cover $S_v$ such that $v\in S_v$. 

From the defender's perspective, sometimes it is not enough to just avoid the bad situation described above with the vertex covers that they work with. Indeed, suppose we have a component $C$ as shown on the right side of~\Cref{fig:stronglygoodweaklygood}. Here, $|V(C)\cap S| > \mathsf{mvc}(C)$, but note that if the attacker attacks $xe$, then the defender is stuck with guards on $v$ and $u$: while in principle $\mathsf{mvc}(C) = 2$, these two guards cannot be moved in a way that will defend all edges in the component $C$, and as before there is no pathway for guards from outside $C$ to enter $C$. 

While not as stark as before, this is a more nuanced bad scenario. Informally speaking, vertex covers that avoid even this scenario are called \emph{strongly good} vertex covers. We show that a graph is Spartan only if for each $v\in V(G)$, there exists a minimum sized and strongly good vertex cover $S_v$ such that $v\in S_v$.

Our final and main result is a characterization of Spartan graphs at large. Let us begin by making a defender's strategy structurally explicit. To this end, consider a graph where we are able to find a non-empty family $\mathcal{F}$ of minimum-sized vertex covers such that the following holds: 

\emph{For every $S \in \mathcal{F}$ and for every edge $uv \in G$ for which $u \in S$ and $v \notin S$, there exists $T \in \mathcal{F}$ such that and $v \in T$ such that $T$ is reachable from $S$ via guard movement while defending $uv$: this is to say that if the defender had guards positioned on $S$, then they will be able to move them in such a way that the guards finally end up on $T$ and one of the guards moves along the edge $uv$.}

Notice that such a family $\mathcal{F}$ naturally translates to a defender strategy: the defender can start with guards positioned on $S$ for any $S \in \mathcal{F}$, and if any edge $uv$ is attacked, the defender can either exchange guards (if both $u$ and $v$ are in $S$) or move guards to the $T$ guaranteed by the property above. This would work indefinitely since every vertex cover in $\mathcal{F}$ offers the same opportunities to the defender. 

This brings us to the question of capturing movements between minimum-sized vertex covers $S$ and $T$. We refer to the vertices in $V \setminus (S \cup T)$ as the \emph{dead zone}. The defender hopes to move guards currently positioned at $S$ to a new target vertex cover $T$ while also defending an arbitrary but fixed edge $uv$ with (say) $u \in S \setminus T$ and $v \in T \setminus S$. Notice that it suffices to have vertex disjoint paths between $(S \setminus T) \setminus \{u\}$ and $(T \setminus S) \setminus \{v\}$ that do not use any vertices from the dead zone. We say that $S$ and $T$ are \emph{mutually reachable} if such paths exist. 

\begin{figure}
    \centering
    \begin{tikzpicture}[scale=0.75]
    \tikzset{decoration={snake,amplitude=.4mm,segment length=2mm, post length=0mm,pre length=0mm}}
    
     \draw[fill=Pink] (0,10) circle (0.3 cm);
         \node at (0,10){$a$};
     \draw[fill=Pink] (2,10) circle (0.3 cm);
         \node at (2,10){$b$};
     \draw[fill=Pink] (0,9) circle (0.3 cm);
         \node at (0,9){$c$};
     \draw[fill=Pink] (2,9) circle (0.3 cm);
         \node at (2,9){$d$};
     \draw[fill=Pink] (0,8) circle (0.3 cm);
         \node at (0,8){$e$};
     \draw[fill=Pink] (2,8) circle (0.3 cm);
         \node at (2,8){$f$};  
     \draw[fill=Pink] (-1,7) circle (0.3 cm);
         \node at (-1,7){$g$};
     \draw[fill=Pink] (0,6) circle (0.3 cm);
         \node at (0,6){$h$}; 
     \draw[fill=Pink] (1,7) circle (0.3 cm);
         \node at (1,7){$i$};
     \draw[fill=Pink] (3,7) circle (0.3 cm);
         \node at (3,7){$j$}; 
    \draw[fill=Pink] (3,6) circle (0.3 cm);
         \node at (3,6){$k$}; 
    \draw(0.3,10)--(1.7,10);
    \draw(0.3,9)--(1.7,9);
    \draw(0.3,9)--(1.7,10);
    \draw(0.3,8)--(1.7,9);
    \draw(0.3,9)--(1.7,8);
    \draw(-0.7,7)--(0.7,7);
    \draw(-0.8,6.8)--(-0.2,6.2);
    \draw(0.8,6.8)--(0.2,6.2);
    \draw(3,6.7)--(3,6.3);
    \draw(1.2,6.8)--(2.8,6.2);
    \draw(-0.2,9.8)--(-1,7.3);
    \draw(1.2,7.2)--(1.8,7.8);
    \draw(0.3,8)--(2.8,7.2);
    \draw(3,7.3)--(2.2,8.8);
    \draw(0.3,6)--(2.7,6);
    %firstVC
      %\draw[fill=Pink] (6,10) circle (0.3 cm);
      %\node at (6,10){$a$};
     \draw[fill=Pink] (8,10) circle (0.3 cm);
         \node at (8,10){$b$};
     %\draw[fill=Pink] (6,9) circle (0.3 cm);
        % \node at (6,9){$c$};
     \draw[fill=Pink] (8,9) circle (0.3 cm);
         \node at (8,9){$d$};
     %\draw[fill=Pink] (6,8) circle (0.3 cm);
        % \node at (6,8){$e$};
     \draw[fill=Pink] (8,8) circle (0.3 cm);
         \node at (8,8){$f$};  
    % \draw[fill=Pink] (5,7) circle (0.3 cm);
        % \node at (5,7){$g$};
    % \draw[fill=Pink] (6,6) circle (0.3 cm);
        % \node at (6,6){$h$}; 
    % \draw[fill=Pink] (7,7) circle (0.3 cm);
        % \node at (7,7){$i$};
     \draw[fill=Pink] (9,7) circle (0.3 cm);
         \node at (9,7){$j$}; 
    \draw[fill=Pink] (9,6) circle (0.3 cm);
         \node at (9,6){$k$}; 
    \draw(6.3,10)--(7.7,10);
    \draw(6.3,9)--(7.7,9);
    \draw[red,dashed](6.3,9)--(7.7,10);
    \draw[->](6.9,9.4)--(7.1,9.6);
    \draw(6.3,8)--(7.7,9);
    \draw(6.3,9)--(7.7,8);
    \draw(5.3,7)--(6.7,7);
    \draw(5.2,6.8)--(5.8,6.2);
    \draw[->](5.3,6.7)--(5.5,6.5);
    \draw(6.8,6.8)--(6.2,6.2);
    \draw[->](6.4,6.4)--(6.6,6.6);
    \draw(9,6.7)--(9,6.3);
    \draw(7.2,6.8)--(8.8,6.2);
    \draw(5.8,9.8)--(5,7.3);
    \draw[->](5.8,9.8)--(5.4,8.6);
    \draw(7.2,7.2)--(7.8,7.8);
    \draw[->](7.3,7.3)--(7.5,7.5);
    \draw(6.3,8)--(8.8,7.2);
    \draw[->](6.3,8)--(7,7.8);
    \draw(9,7.3)--(8.2,8.8);
    \draw[->](9,7.3)--(8.4,8.4);
    \draw(6.3,6)--(8.7,6);

\node at (6,10)[draw,fill=Purple!70,minimum size=12pt](){$a$};
\node at (6,9)[draw,fill=Purple!70,minimum size=12pt](){$c$};
\node at (6,8)[draw,fill=Purple!70,minimum size=12pt](){$e$};
\node at (5,7)[draw,fill=Purple!70,minimum size=12pt](){$g$};
\node at (6,6)[draw,fill=Purple!70,minimum size=12pt](){$h$};
\node at (7,7)[draw,fill=Purple!70,minimum size=12pt](){$i$};
\node at (9,7)[draw,fill=Purple!70,minimum size=12pt](){$j$};
    %secondVC
      \draw[fill=Pink] (12,10) circle (0.3 cm);
      \node at (12,10){$a$};
         
    % \draw[fill=Pink] (14,10) circle (0.3 cm);
         %\node at (14,10){$b$};
     \draw[fill=Pink] (12,9) circle (0.3 cm);
         \node at (12,9){$c$};
     \draw[fill=Pink] (14,9) circle (0.3 cm);
         \node at (14,9){$d$};
     \draw[fill=Pink] (12,8) circle (0.3 cm);
         \node at (12,8){$e$};
     \draw[fill=Pink] (14,8) circle (0.3 cm);
         \node at (14,8){$f$};  
     \draw[fill=Pink] (11,7) circle (0.3 cm);
         \node at (11,7){$g$};
     \draw[fill=Pink] (12,6) circle (0.3 cm);
         \node at (12,6){$h$}; 
     \draw[fill=Pink] (13,7) circle (0.3 cm);
         \node at (13,7){$i$};
     \draw[fill=Pink] (15,7) circle (0.3 cm);
         \node at (15,7){$j$}; 
    \draw[fill=Pink] (15,6) circle (0.3 cm);
         \node at (15,6){$k$}; 
    \draw(12.3,10)--(13.7,10);
    \draw(12.3,9)--(13.7,9);
    \draw[SeaGreen,thick,decorate](12.3,9)--(13.7,10);
    \draw(12.3,8)--(13.7,9);
    \draw(12.3,9)--(13.7,8);
    \draw(11.3,7)--(12.7,7);
    \draw[red,thick,dashed](11.2,6.8)--(11.8,6.2);
    \draw[red,thick,dashed](12.8,6.8)--(12.2,6.2);
    \draw(15,6.7)--(15,6.3);
    \draw(13.2,6.8)--(14.8,6.2);
    \draw[red,thick,dashed](11.8,9.8)--(11,7.3);
    \draw[red,thick,dashed](13.2,7.2)--(13.8,7.8);
    \draw[Blue,thick,dash dot](12.3,8)--(14.8,7.2);
    \draw[Blue,thick,dash dot](15,7.3)--(14.2,8.8);
    \draw(12.3,6)--(14.7,6);

\node at (14,10)[draw,fill=Purple!70,minimum size=12pt](){$b$};
\node at (14,9)[draw,fill=Purple!70,minimum size=12pt](){$d$};
\node at (14,8)[draw,fill=Purple!70,minimum size=12pt](){$f$};
\node at (11,7)[draw,fill=Purple!70,minimum size=12pt](){$g$};
\node at (12,6)[draw,fill=Purple!70,minimum size=12pt](){$h$};
\node at (13,7)[draw,fill=Purple!70,minimum size=12pt](){$i$};
\node at (15,7)[draw,fill=Purple!70,minimum size=12pt](){$j$};
    \end{tikzpicture}
    \caption{Demonstrating reachability between vertex covers $S = \{a,c,e,g,h,i,j\}$ and $T = \{b,d,f,g,h,i,j\}$ with the vertex $\{k\}$ in the dead zone.}
    \label{fig:pathsfromvcs}
\end{figure}
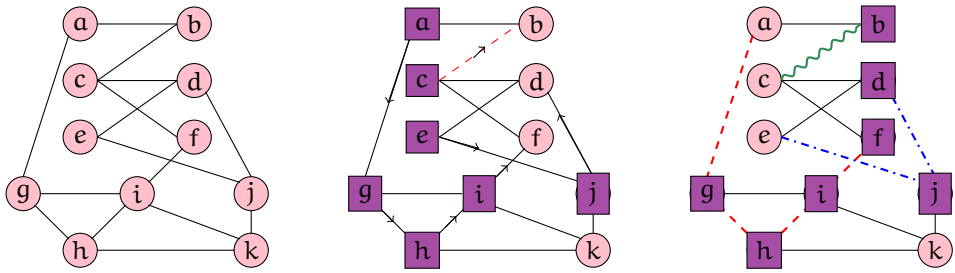

It is intuitive that the existence of a family of mutually reachable minimum vertex covers is a necessary and sufficient condition for a graph to be Spartan: indeed, given such a family, the defender has a strategy, and if the defender has a strategy, then the guard positions naturally correspond to such a family. Our main contribution is to capture the requirement of mutual reachability between vertex covers $S$ and $T$ in terms of matchings in an appropriately defined auxiliary graph $\mathfrak{h}_G(S,T)$ based on $S$ and $T$. The notion of the auxiliary graph required to translate the condition mutual reachability in terms of matchings is somewhat technical and we omit the specifics from the introduction. The non-trivial technical aspect is to ensure that the reachability guaranteed by disjoint paths as described above is in fact captured by the existence of a matching in $\mathfrak{h}_G(S,T)$.

\section{Preliminaries}\label{sec:prelims}

Let $G(V,E)$ be a simple, finite and undirected graph with $n>1$ vertices and $m$ edges (unless mentioned otherwise). Also unless mentioned otherwise, we will assume that the graph $G$ is connected (and in particular, has no isolated vertices). We use standard notation from~\cite{graphsbook}. If $S$ is a subset of $V$, $G[S]$ denotes the subgraph of $G$ induced by the vertices in $S$. The set of all vertices $v$ such that $uv\in E(G)$ is denoted by $N(u)$ and is said to be the \emph{open neighbourhood} of $u$. The set $N(u)\cup\{u\}$ is said to be the \emph{closed neighbourhood} of $v$.

A \textit{path} on $n$ vertices is a graph $P$ with vertices $\{v_1,v_2,\ldots, v_n\}$ such that each $(v_i,v_{i+1})\in E(P)$ for $i\in[1,n-1]$. A \textit{cycle} on $n$ vertices is a graph $C$ with vertices $\{v_1,v_2,\ldots ,v_n\}$ such that each $(v_i,v_{i+1})\in E(C)$ for $i\in[1,n-1]$ and $(v_n,v_1)\in E(C)$. A graph $G(V,E)$ is said to be \textit{connected} if for every two distinct vertices $u,v\in V$, there exists a path between $u$ and $v$. 

A subset $S$ of $V(G)$ is said to be a \textit{vertex cover} of $G$ if for every edge $(u,v)\in E$, either $u\in S$ or $v\in S$. The size of the smallest vertex cover of the graph $G$ is called the \textit{minimum vertex cover number} of $G$ and denoted by $\mathsf{mvc}(G)$. The graphs with $\mathsf{evc}(G)=\mathsf{mvc}(G)$ are known as \emph{Spartan graphs}.

A subset $S$ of $V(G)$ is said to be an \textit{independent set} of $G$ if no edge of $E(G)$ has both its endpoints in $S$. It can be seen that the complement of a vertex cover will be an independent set and vice versa. 

A \emph{matching} is a set of edges with no common endpoint. A \emph{perfect matching} is matching which contains one edge adjacent to each vertex of a graph. A graph may not always have a perfect matching. A matching of the largest cardinality in a graph is called a \emph{maximum matching} and the size of this matching is denoted by $mm(G)$. Examples are shown in \Cref{fig:matching}, \Cref{fig:perfectmatching} and \Cref{fig:maxmatching}.

A \emph{bipartite graph} $G = (V,E)$ is a graph whose vertex set can be partitioned into two independent sets, say $V = (A \cup B)$, that is every edge is between a vertex in $A$ and one in $B$. Clearly, both $A$ and $B$ are vertex covers of $G$. If a bipartite graph $G = (A \cup B, E)$ is connected and its \emph{only} optimal vertex covers are $A$ and $B$, then we say that $G$ is \emph{elementary}. If every connected component of a bipartite graph is elementary, then we call it \emph{essentially elementary}.

In \Cref{sec:konig}, we need to use the following result about bipartite graphs which is given in \cite{LLPM-matchingbook} and can also be obtained by a slight modification of a result in \cite{MN23}.

\begin{lemma}\label{degone}
    Let $G$ be a connected bipartite graph such that $V(G)=A\cup B$ where $A$ and $B$ are non-empty independent sets. If $G$ has at least one perfect matching and $G$ is non-elementary, then there exist non-empty $S\subsetneq A$ and $T\subsetneq B$ such that $|N(S)|=|S|$ and $|N(T)|=|T|$.
\end{lemma}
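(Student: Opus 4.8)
The plan is to reduce the statement to a single Hall-type deficiency condition on the $A$-side and transfer it to the $B$-side by complementation. First I would record two standard facts. Since $G$ has a perfect matching and every matching edge joins $A$ and $B$, we have $|A| = |B|$, and by König's theorem every minimum vertex cover of $G$ has size exactly $mm(G) = |A| = |B|$. Moreover, the existence of a perfect matching means Hall's condition holds on both sides: $|N(X)| \geq |X|$ for every $X \subseteq A$ and for every $X \subseteq B$.

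Next I would unpack the hypothesis that $G$ is non-elementary: there is a minimum vertex cover $C$ with $C \neq A$ and $C \neq B$. Write $C_A = C \cap A$ and $C_B = C \cap B$. Because $|C| = |A| = |B|$, the set $C$ cannot be contained in $A$ (else $C = A$) nor in $B$ (else $C = B$), so both $C_A$ and $C_B$ are nonempty. Equivalently, $S := A \setminus C_A$ and $T := B \setminus C_B$ are both nonempty, and they are \emph{proper} subsets of $A$ and $B$ respectively precisely because $C_A$ and $C_B$ are nonempty.

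The key observation is that $V \setminus C$ is an independent set, so no vertex of $S$ can have a neighbour outside $C$; since the neighbours of $S$ all lie in $B$, we get $N(S) \subseteq C_B$, and symmetrically $N(T) \subseteq C_A$. Now a one-line count using $|C| = |A| = |B|$: we have $|C_B| = |C| - |C_A| = |A| - |C_A| = |S|$ and likewise $|C_A| = |B| - |C_B| = |T|$. Hence $|N(S)| \leq |S|$ and $|N(T)| \leq |T|$, and combining with the Hall inequalities from the first step forces $|N(S)| = |S|$ and $|N(T)| = |T|$, which is exactly the desired conclusion.

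I do not expect a serious obstacle; once the König/Hall machinery is in place the argument is short. The two points that need a little care are (i) checking that $S$ and $T$ are proper subsets rather than merely nonempty — this is where $C \neq A$, $C \neq B$ and $|C| = |A| = |B|$ all enter — and (ii) being explicit that König's theorem pins down $|C|$ while the perfect matching supplies the lower bounds $|N(X)| \geq |X|$ via Hall's theorem. An alternative, slightly longer route would prove only the $A$-side statement and then derive the $B$-side one by setting $T = B \setminus N(S)$ and re-running the independence/counting argument; this is essentially the ``slight modification of a result in~\cite{MN23}'' alluded to above, but the symmetric presentation through a single non-trivial vertex cover $C$ is cleaner.
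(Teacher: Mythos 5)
Your proof is correct, but it takes a genuinely different route from the paper's. The paper argues by contradiction on one side at a time: assuming no nonempty proper $S\subsetneq A$ with $|N(S)|\leq|S|$ exists, it applies Hall's theorem to $A\setminus\{a\}$ and $B\setminus\{b\}$ to show every edge $ab$ lies in a perfect matching, concludes that $G$ is elementary via the classical ``every edge in a perfect matching, connected'' characterization (the alternate definition from Lov\'asz--Plummer that the paper cites), derives a contradiction, and then repeats the argument symmetrically for $B$. You instead work directly from the paper's vertex-cover definition of elementary: since a perfect matching forces $|A|=|B|=mm(G)=\mathsf{mvc}(G)$, both $A$ and $B$ are minimum vertex covers, so non-elementarity yields a third minimum vertex cover $C$, and the sets $S=A\setminus C$, $T=B\setminus C$ are nonempty proper subsets whose neighborhoods land in $C\cap B$ and $C\cap A$ respectively (because $V\setminus C$ is independent); the count $|C\cap B|=|A|-|C\cap A|=|S|$ together with Hall's inequality from the perfect matching forces equality, and symmetrically for $T$. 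Your argument is shorter, constructive (it exhibits the tight sets explicitly), produces $S$ and $T$ simultaneously from the single cover $C$, and avoids invoking the equivalence between the two notions of elementary; the paper's argument, by contrast, is tied to that equivalence and yields the additional structural fact that failure of tightness forces every edge into a perfect matching. Both are valid; the only care point in yours, which you handled, is that the perfect matching is what guarantees $A$ and $B$ are themselves optimal covers, so that non-elementarity really does supply a cover $C\notin\{A,B\}$ and that $C\cap A$, $C\cap B$ are both nonempty, making $S$ and $T$ proper and nonempty.
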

\begin{proof}
  Let $G$ be a non-elementary connected bipartite graph with a $V(G)=A\cup B$ where $A$ and $B$ are non-empty independent sets and a perfect matching. Suppose that for every non-empty subset $S$ of $A$, $S\subsetneq A$ we have $|N(S)|>|S|$. Let $ab\in E(G)$. We show that there exists a perfect matching between $A\setminus \{a\}$ and $B\setminus\{b\}$. Consider any non-empty $X\subseteq A\setminus \{a\}$. If $|N(X)\cap(B\setminus\{b\})|<|X|$, then $|N(X)\cap B|\leq |X|$. That is, $X\subsetneq A$ such that $|N(X)|\leq |X|$ which is contrary to our assumption that for every subset $S$ of $A$, $S\subsetneq A$ we have $|N(S)|>|S|$. Thus we cannot have any $X\subseteq A\setminus \{a\}$ such that $|N(X)\cap(B\setminus\{b\})|<|X|$. Therefore, by Hall's theorem, there exists a perfect matching $M_1$ between $A\setminus \{a\}$ and $B\setminus \{b\}$. Thus $M=M_1\cup \{ab\}$ is a perfect matching of $G$ containing $ab$. Since $ab$ was an arbitrary edge of $G$, every edge of $G$ belongs to some perfect matching and $G$ is connected which implies that $G$ is elementary which is a contradiction. 
  Therefore, there must exist a non-empty subset $S$ of $A$ such that $S\subsetneq A$ and $|N(S)|\leq |S|$ but $|N(S)|$ cannot be less than $|S|$ because $G$ has a perfect matching. Thus $|N(S)|=|S|$. With a symmetric argument, we get that there exists a non-empty $T\subsetneq B$ such that $|N(T)|=|T|$.
 \end{proof}

\begin{figure}
    \centering
    \scalebox{1.3}{
\begin{tikzpicture}

%\draw[step=1cm] (-5,-3) grid (5,3);
%\draw (-5,-3) rectangle (5,3);

\node[small circles=DodgerBlue] (A1) at (0,0) {};
\node[small circles=DodgerBlue] (B1) [below of=A1] {};

\foreach \x in {2,3,...,6}{
    \pgfmathtruncatemacro{\y}{\x-1}
    \node[small circles=DodgerBlue] (A\x) [right of=A\y] {};
}
\foreach \x in {2,3,...,5}{
    \pgfmathtruncatemacro{\y}{\x-1}
    \node[small circles=DodgerBlue] (B\x) [right of=B\y] {};
};
\node (B6) [right of = B5] {};    
\node[small circles=DodgerBlue] (B7) [right of = B6] {};

\draw (A1) -- (B1) -- (A2) -- (B3) -- (A3)  -- (B4)  -- (A4) -- (B5) -- (A6) -- (B7);
\draw (A1) -- (B2);
\draw (B1) -- (A2);
\draw (A1) -- (B2);
\draw (A5) -- (B7);
\draw (A5) -- (B4);
\draw (A3) -- (B1);

\draw[thick,color=SeaGreen,decoration = {zigzag,segment length = 3pt,amplitude=1pt},decorate] (A1) -- (B1);

\draw[thick,color=SeaGreen,decoration = {zigzag,segment length = 3pt,amplitude=1pt},decorate] (A3) -- (B3);

\draw[thick,color=SeaGreen,decoration = {zigzag,segment length = 3pt,amplitude=1pt},decorate] (A4) -- (B4);

\end{tikzpicture}
}
   
\caption{The green wavy edges form a matching in the given graph $G$}
    \label{fig:matching}
\end{figure}
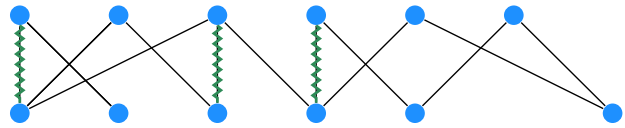
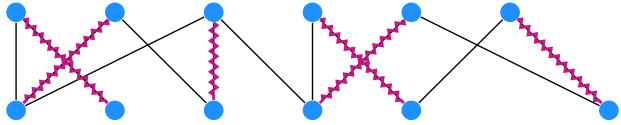
\begin{figure}
    \centering
   \scalebox{1.3}{
\begin{tikzpicture}

%\draw[step=1cm] (-5,-3) grid (5,3);
%\draw (-5,-3) rectangle (5,3);

\node[small circles=DodgerBlue] (A1) at (0,0) {};
\node[small circles=DodgerBlue] (B1) [below of=A1] {};

\foreach \x in {2,3,...,6}{
    \pgfmathtruncatemacro{\y}{\x-1}
    \node[small circles=DodgerBlue] (A\x) [right of=A\y] {};
}
\foreach \x in {2,3,...,5}{
    \pgfmathtruncatemacro{\y}{\x-1}
    \node[small circles=DodgerBlue] (B\x) [right of=B\y] {};
};
\node (B6) [right of = B5] {};    
\node[small circles=DodgerBlue] (B7) [right of = B6] {};

\draw (A1) -- (B1) -- (A2) -- (B3) -- (A3)  -- (B4)  -- (A4) -- (B5) -- (A6) -- (B7);
\draw (A1) -- (B2);
\draw (B1) -- (A2);
\draw (A1) -- (B2);
\draw (A5) -- (B7);
\draw (A5) -- (B4);
\draw (A3) -- (B1);

\draw[thick,color=MediumVioletRed,decoration = {zigzag,segment length = 3pt,amplitude=1pt},decorate] (A1) -- (B2);

\draw[thick,color=MediumVioletRed,decoration = {zigzag,segment length = 3pt,amplitude=1pt},decorate] (A2) -- (B1);

\draw[thick,color=MediumVioletRed,decoration = {zigzag,segment length = 3pt,amplitude=1pt},decorate] (A3) -- (B3);

\draw[thick,color=MediumVioletRed,decoration = {zigzag,segment length = 3pt,amplitude=1pt},decorate] (A4) -- (B5);

\draw[thick,color=MediumVioletRed,decoration = {zigzag,segment length = 3pt,amplitude=1pt},decorate] (A5) -- (B4);

\draw[thick,color=MediumVioletRed,decoration = {zigzag,segment length = 3pt,amplitude=1pt},decorate] (A6) -- (B7);

\end{tikzpicture}
}
\label{fig:perfectmatching}
\caption{The purple wavy edges form a perfect matching in the given graph $G$}
\end{figure}
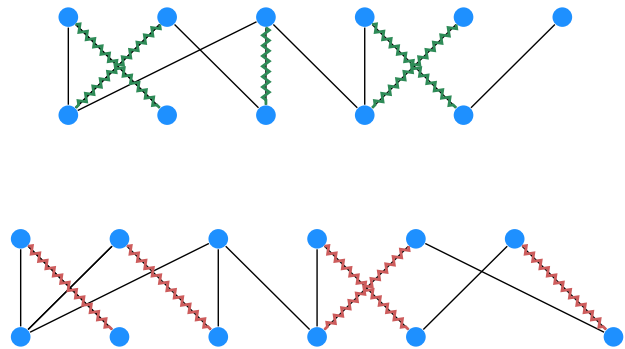
\begin{figure}
\centering
\scalebox{1.3}{
\begin{tikzpicture}

%\draw[step=1cm] (-5,-3) grid (5,3);
%\draw (-5,-3) rectangle (5,3);

\node[small circles=DodgerBlue] (A1) at (0,0) {};
\node[small circles=DodgerBlue] (B1) [below of=A1] {};

\foreach \x in {2,3,...,6}{
    \pgfmathtruncatemacro{\y}{\x-1}
    \node[small circles=DodgerBlue] (A\x) [right of=A\y] {};
}
\foreach \x in {2,3,...,5}{
    \pgfmathtruncatemacro{\y}{\x-1}
    \node[small circles=DodgerBlue] (B\x) [right of=B\y] {};
};
\node (B6) [right of = B5] {};

\draw (A1) -- (B1) -- (A2) -- (B3) -- (A3)  -- (B4)  -- (A4) -- (B5) -- (A6);
\draw (A1) -- (B2);
\draw (B1) -- (A2);
\draw (A1) -- (B2);

\draw (A5) -- (B4);
\draw (A3) -- (B1);

\draw[thick,color=SeaGreen,decoration = {zigzag,segment length = 3pt,amplitude=1pt},decorate] (A1) -- (B2);

\draw[thick,color=SeaGreen,decoration = {zigzag,segment length = 3pt,amplitude=1pt},decorate] (A2) -- (B1);

\draw[thick,color=SeaGreen,decoration = {zigzag,segment length = 3pt,amplitude=1pt},decorate] (A3) -- (B3);

\draw[thick,color=SeaGreen,decoration = {zigzag,segment length = 3pt,amplitude=1pt},decorate] (A4) -- (B5);

\draw[thick,color=SeaGreen,decoration = {zigzag,segment length = 3pt,amplitude=1pt},decorate] (A5) -- (B4);

\end{tikzpicture}
}

\scalebox{1.3}{
\begin{tikzpicture}
%\draw[step=1cm] (-5,-3) grid (5,3);
%\draw (-5,-3) rectangle (5,3);

\node at (0,0){};
\node[small circles=DodgerBlue] (A1) at (0,-1) {};
\node[small circles=DodgerBlue] (B1) [below of=A1] {};

\foreach \x in {2,3,...,6}{
    \pgfmathtruncatemacro{\y}{\x-1}
    \node[small circles=DodgerBlue] (A\x) [right of=A\y] {};
}
\foreach \x in {2,3,...,5}{
    \pgfmathtruncatemacro{\y}{\x-1}
    \node[small circles=DodgerBlue] (B\x) [right of=B\y] {};
};
\node (B6) [right of = B5] {};    
\node[small circles=DodgerBlue] (B7) [right of = B6] {};

\draw (A1) -- (B1) -- (A2) -- (B3) -- (A3)  -- (B4)  -- (A4) -- (B5) -- (A6) -- (B7);
\draw (A1) -- (B2);
\draw (B1) -- (A2);
\draw (A1) -- (B2);
\draw (A5) -- (B7);
\draw (A5) -- (B4);
\draw (A3) -- (B1);

\draw[thick,color=IndianRed,decoration = {zigzag,segment length = 3pt,amplitude=1pt},decorate] (A1) -- (B2);

\draw[thick,color=IndianRed,decoration = {zigzag,segment length = 3pt,amplitude=1pt},decorate] (A2) -- (B3);

\draw[thick,color=IndianRed,decoration = {zigzag,segment length = 3pt,amplitude=1pt},decorate] (A4) -- (B5);

\draw[thick,color=IndianRed,decoration = {zigzag,segment length = 3pt,amplitude=1pt},decorate] (A5) -- (B4);

\draw[thick,color=IndianRed,decoration = {zigzag,segment length = 3pt,amplitude=1pt},decorate] (A6) -- (B7);

\end{tikzpicture}
}

\caption{The green wavy edges in the top figure do not form a maximum matching of the given graph $G$ although the matching is maximal, i.e., no other edge of the graph can be added to it to form a bigger matching. The red wavy edges in the bottom figure form a maximum matching of the given graph $G$.}
\label{fig:maxmatching}
\end{figure}

%add the definition of independent set
%add Hall's Theorem
%add the definitions of matching, perfect matching and maximum matching
%add the definitions of elementary and essentially elementary bipartite graph
\newpage
\section{New Lower Bounds}\label{sec:lowerbounds}
In this section we give some necessary conditions which a graph $G$ must satisfy in order to have $\mathsf{evc}(G)=\mathsf{mvc}(G)$. One such condition is stated in \cite{BCFPRW2021} which is as follows:
\begin{lemma}\label{vinVC}
    For a graph $G$ which has more than one vertex, if $\mathsf{evc}(G)=\mathsf{mvc}(G)$ then every vertex $v$ of $G$ must belong to a minimum sized vertex cover $S_v$.
\end{lemma}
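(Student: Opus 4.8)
The plan is to argue the contrapositive: I will show that if there is some vertex $v$ that lies in \emph{no} minimum-sized vertex cover of $G$, then the attacker has a winning strategy against $\mathsf{mvc}(G)$ guards, and hence $\mathsf{evc}(G) > \mathsf{mvc}(G)$. This is the natural direction since the defender, in order to maintain a vertex cover of size exactly $\mathsf{mvc}(G)$ at every point in time, must always have the guards occupying some minimum-sized vertex cover.

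First I would record the standing observation that underlies the whole eternal game: if the defender ever succeeds in defending with only $k = \mathsf{mvc}(G)$ guards, then after every round the multiset of guard positions must be a vertex cover of $G$, and since it has size $k$ it must in fact be a \emph{minimum} vertex cover (with no two guards on the same vertex, as $k = \mathsf{mvc}(G)$ forces all positions distinct). So at every stage of a successful defense the guard set is some $S \in \mathcal{S}$, where $\mathcal{S}$ denotes the family of all minimum-sized vertex covers of $G$.

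Next, suppose $v \notin \bigcup_{S \in \mathcal{S}} S$, i.e.\ $v$ belongs to no minimum vertex cover. Since $G$ is connected and has more than one vertex, $v$ has at least one neighbor, so there is an edge $e = \{u,v\}$ incident on $v$. Now the attacker simply attacks $e$ repeatedly (or attacks it once; once suffices). Consider any configuration the defender holds just before this attack: it is some $S \in \mathcal{S}$, and by assumption $v \notin S$. To defend the attack on $e = \{u,v\}$, the defender must move a guard \emph{along} the edge $e$, and after all the (simultaneous, each-guard-at-most-once) moves the resulting configuration $S'$ must again be a vertex cover of size $k$ — hence $S' \in \mathcal{S}$. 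But defending the attack on $\{u,v\}$ by moving a guard along $e$ means that guard ends up either at $u$ or at $v$; since the attacked edge must be covered after the move by the guard that traversed it, and the only way a traversal of $\{u,v\}$ can leave that guard is at $u$ or at $v$, and $S'$ being a vertex cover of $\{u,v\}$-incident edges needs the endpoint covered — the point is that the guard that crossed $e$ occupies $v$ after the move (it moved from $u$ to $v$ along $e$; had it moved from $v$ to $u$ there was no guard on $v$ to begin with). Either way I get $v \in S'$, contradicting $S' \in \mathcal{S}$ since no minimum vertex cover contains $v$. Hence the defender cannot legally respond, and the attacker wins.

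I expect the only delicate point to be pinning down precisely what ``move a guard along $e$ to defend the attack'' forces, and confirming that it necessarily places a guard on $v$ after the response (the alternative reading — sliding a guard off $v$ — is impossible here because there was no guard on $v$ in $S$). Once that is nailed down, the argument is immediate. It may be cleanest to phrase it as: the responding configuration $S'$ is a size-$k$ vertex cover containing $v$, which cannot exist; therefore no size-$k$ defense survives an attack on any edge incident to $v$, so $\mathsf{evc}(G) \geqslant \mathsf{mvc}(G) + 1$, proving the contrapositive.
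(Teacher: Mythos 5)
Your proposal is correct and follows essentially the same route as the paper: attack an edge $uv$ incident to the uncovered vertex $v$, observe that the defense forces a guard onto $v$, and conclude that the resulting $\mathsf{mvc}(G)$-sized guard set cannot be a vertex cover since no minimum vertex cover contains $v$, so the attacker wins. The only cosmetic difference is that you phrase it as a contradiction with the invariant that every surviving configuration is a minimum vertex cover, while the paper lets the attacker win on the next round by striking the vulnerable edge; these are the same argument.
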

\begin{proof}
Consider a graph $G$ which has more than one vertex and $\mathsf{evc}(G)=\mathsf{mvc}(G)$ such that there exists $v\in V(G)$ such that $G$ has no minimum sized vertex cover which contains $v$. Therefore, in the initial configuration formed by the guards $v$ is unoccupied. Let $u$ be a neighbour of $v$. Now suppose the attacker attacks the edge $uv$. If a guard was not present on $u$, the attacker wins. If a guard was present on $u$ before the attack, the guard on $u$ moves to $v$ and since there is no minimum sized vertex cover of $G$ which contains $v$, the remaining guards cannot arrange themselves to form a vertex cover of $G$. Thus some edge must remain vulnerable after this attack and thus the attacker can attack it in the next round and win.
 
\end{proof}

For understanding the nature of Spartan graphs, it is sufficient to look at only connected components. The lemma below which was proven in \cite{MN23} justifies this.
\begin{lemma}\label{connected}
    Now, let $G$ be a Spartan graph with connected components $C_1, \ldots, C_\ell$. If $G$ is Spartan then $G[C_i]$ is Spartan for all $1 \leqslant i \leqslant [\ell]$.
\end{lemma}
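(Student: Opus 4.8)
The plan is to exploit the additivity of both $\mathsf{mvc}$ and $\mathsf{evc}$ over connected components. First I would record the easy observation that $\mathsf{mvc}(G) = \sum_{i=1}^{\ell} \mathsf{mvc}(G[C_i])$: a set $S \subseteq V(G)$ is a vertex cover of $G$ if and only if $S \cap V(C_i)$ is a vertex cover of $G[C_i]$ for every $i$, since every edge of $G$ lies entirely within one component and the choices made in different components do not interact.

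Next, and this is the crux, I would argue that $\mathsf{evc}(G) = \sum_{i=1}^{\ell} \mathsf{evc}(G[C_i])$. For the upper bound, the defender places $\mathsf{evc}(G[C_i])$ guards on a winning initial configuration for each $G[C_i]$ separately; as any attacked edge belongs to a single component, the defender answers each attack using the winning strategy for that component, leaving the guards in the other components untouched. For the lower bound, the key point is that guards can never move between components, because $G$ has no edges joining distinct $C_i$'s; hence the number $k_i$ of guards lying in $C_i$ is an invariant of the play, with $\sum_i k_i$ equal to the total number of guards. Fixing a component $C_i$ and restricting the attacker to edges of $C_i$, the guards inside $C_i$ must by themselves answer every such attack forever while always inducing a vertex cover of $G[C_i]$; thus $k_i \geq \mathsf{evc}(G[C_i])$, and summing over $i$ gives the bound.

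Finally, I would chain the (in)equalities:
\[
\mathsf{mvc}(G) \;=\; \mathsf{evc}(G) \;=\; \sum_{i=1}^{\ell}\mathsf{evc}(G[C_i]) \;\geqslant\; \sum_{i=1}^{\ell}\mathsf{mvc}(G[C_i]) \;=\; \mathsf{mvc}(G),
\]
where the first equality is the hypothesis that $G$ is Spartan and the inequality is the general bound $\mathsf{mvc}(H)\leqslant \mathsf{evc}(H)$ applied to each component. So the inequality is in fact an equality, and since each summand on the left dominates the corresponding summand on the right, we must have $\mathsf{evc}(G[C_i]) = \mathsf{mvc}(G[C_i])$ for every $i$; that is, each $G[C_i]$ is Spartan.

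The only part requiring genuine care is the lower bound $\mathsf{evc}(G) \geqslant \sum_{i}\mathsf{evc}(G[C_i])$, where one must make precise that a winning defender strategy on $G$ restricts to a winning defender strategy on each component --- using that no guard ever crosses between components and that the attacker is free to confine all of its attacks to one chosen component. The rest is bookkeeping. (This statement is also recorded in~\cite{MN23}.)
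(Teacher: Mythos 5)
Your proposal is correct and follows essentially the same route as the paper: chain $\mathsf{mvc}(G)=\mathsf{evc}(G)=\sum_i\mathsf{evc}(G[C_i])\geqslant\sum_i\mathsf{mvc}(G[C_i])=\mathsf{mvc}(G)$ and conclude by contradiction that each summand must be tight. The only difference is that the paper simply asserts the component-wise additivity of $\mathsf{evc}$ (and of $\mathsf{mvc}$) without proof, whereas you spell out the routine justification that guards cannot cross between components.
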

\begin{proof}
    Note that: 

$$\mathsf{evc}(G) = \sum_{i=1}^{\ell}\mathsf{evc}(G[C_i]) \geqslant \sum_{i=1}^{\ell}\mathsf{mvc}(G[C_i]) = \mathsf{mvc}(G).$$

%%\implies \mathsf{evc}(G[C_i]) \geqslant \mathsf{mvc}(G[C_i])~~\forall i \in [\ell]

Therefore, if $G$ is Spartan then $G[C_i]$ is Spartan for all $1 \leqslant i \leqslant [\ell]$. Indeed, if not, then there exists a component $C_i$ for which $\mathsf{evc}(G[C_i]) > \mathsf{mvc}(G[C_i])$. But combined with the inequality above, this will imply that $\mathsf{evc}(G) > \mathsf{mvc}(G)$, contradicting our assumption that $G$ is Spartan. 
\end{proof}
Thus we will now look at only connected graphs.
\begin{lemma}\label{matchI}
    If a graph $G$ with more than one vertex is Spartan and $I$ is a maximum independent set of $G$, then there is a matching of $I$ to $V(G)\setminus I$ which saturates $I$.
\end{lemma}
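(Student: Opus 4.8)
The plan is to prove the contrapositive in the following sharp form: if $I$ is a maximum independent set of $G$ and there is no matching of $I$ to $V(G)\setminus I$ saturating $I$, then some vertex of $G$ lies in no minimum vertex cover, whence $G$ is not Spartan by \Cref{vinVC}. Write $S := V(G)\setminus I$, and let $H$ be the bipartite graph on $I \cup S$ whose edges are exactly the edges of $G$ joining $I$ to $S$; since $I$ is independent, a matching of $I$ to $V(G)\setminus I$ saturating $I$ is the same thing as a matching of $H$ saturating $I$. Assuming none exists, fix a maximum matching $M$ of $H$ and a vertex $v \in I$ left unsaturated by $M$.

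First I would run the standard alternating-path analysis. Let $R$ be the set of vertices reachable from $v$ by $M$-alternating paths in $H$, and set $T := R \cap I$ and $W := R \cap S$. Because $M$ is maximum there is no augmenting path from $v$, so every vertex of $W$ is $M$-saturated and matched into $T$; a short argument then gives that $M$ restricts to a perfect matching of $W$ onto $T\setminus\{v\}$ (so $|W| = |T|-1$), that $N_G(T) = W$ (using $N_G(T) \subseteq S$), and that $H[T\cup W]$ is connected. (An alternative to the alternating-path bookkeeping is to take $T$ to be an inclusion-minimal subset of $I$ with $|N_G(T)| < |T|$; minimality forces $|N_G(T)| = |T|-1$ and connectedness of $G[T\cup N_G(T)]$.)

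The crux is the claim that $W$ is the \emph{unique} minimum vertex cover of $G[T\cup W]$. Since $W$ covers every edge of $G[T\cup W]$ and the matching above has size $|W|$, we get $\mathsf{mvc}(G[T\cup W]) = |W|$, so $W$ is one minimum vertex cover. For uniqueness, suppose $C \neq W$ were another minimum vertex cover of $G[T\cup W]$. Then $|C| = |W|$ equals the number of matched pairs, so $C$ meets each matched pair in exactly one vertex and contains nothing else; in particular $v$, being the one unmatched vertex, is not in $C$, and therefore $N_G(v) \subseteq C$. Pick $w \in W\setminus C$ (possible since $C\neq W$ and $|C|=|W|$) and an $M$-alternating path $v = x_0, x_1, \dots, x_{2j-1} = w$; walking it back from $w$ and alternately using the vertex-cover condition at an uncovered odd vertex and then the ``one vertex per matched pair'' property, one shows $x_{2i+1}\notin C$ for all $i$, so $x_1 \notin C$ --- contradicting $x_1 \in N_G(v) \subseteq C$. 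Thus $C = W$.

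Finally I would transfer this to $G$. For any minimum vertex cover $Q$ of $G$, the set $Q' := (Q\setminus(T\cup W)) \cup W$ is again a vertex cover of $G$: an edge meeting $T$ has its other endpoint in $N_G(T)=W\subseteq Q'$, an edge meeting $W$ is covered by $W$, and every other edge keeps its original $Q$-endpoint. Minimality of $Q$ then gives $|Q\cap(T\cup W)| \le |W|$, while $Q\cap(T\cup W)$ is a vertex cover of $G[T\cup W]$ and so has size at least $\mathsf{mvc}(G[T\cup W])=|W|$; hence $Q\cap(T\cup W)$ is a minimum vertex cover of $G[T\cup W]$ and therefore equals $W$, giving $Q\cap T=\emptyset$. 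Since $Q$ was an arbitrary minimum vertex cover, $v$ lies in no minimum vertex cover of $G$, contradicting \Cref{vinVC}. I expect the uniqueness claim for the minimum vertex cover of $G[T\cup W]$ to be the main obstacle; once it is in place, the alternating-path setup and the swap producing $Q'$ are routine.
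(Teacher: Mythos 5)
Your proposal is correct; I checked the alternating-path bookkeeping, the uniqueness claim for $G[T\cup W]$, and the final swap, and each step goes through. It reaches the same endpoint as the paper --- a vertex that lies in no minimum vertex cover, contradicting \Cref{vinVC} --- and uses the same kind of swap ($Q' = (Q\setminus(T\cup W))\cup W$, compare the paper's $T' = (T\setminus(X\cup N(X)))\cup N(X)$), but the middle of the argument is genuinely different. The paper invokes Hall's theorem, takes an inclusion-wise minimal violator $X\subseteq I$ (so $|X|=|N(X)|+1$ and $X\setminus\{x\}$ has a perfect matching into $N(X)$ for each $x$), and then a short counting argument shows that a minimum vertex cover containing some $x\in X$ would have $|N(X)|+1$ vertices inside $X\cup N(X)$, which the swap forbids. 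You instead fix a maximum matching, run the K\H{o}nig/Berge-style alternating-path analysis from an unsaturated vertex $v$ to get the deficient set $T$ with $N_G(T)=W$ and $|W|=|T|-1$, and prove the stronger statement that $W$ is the \emph{unique} minimum vertex cover of $G[T\cup W]$, from which every minimum vertex cover of $G$ avoids all of $T$. The paper's route is shorter, since the perfect matchings supplied by minimality replace your uniqueness claim and its alternating-path induction; your route is somewhat longer but yields a sharper local conclusion ($Q\cap T=\emptyset$ for every minimum vertex cover $Q$), and, as you never use maximality of $I$, it in fact establishes the matching condition for an arbitrary independent set of a Spartan graph, relying only on \Cref{vinVC}.
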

\begin{proof}
    Consider a Spartan graph $G$ (with more than one vertex) and let $I$ be a maximum independent set of $G$ and $C=V(G)\setminus I$ be a minimum sized vertex cover. Suppose there is no matching from $I$ to $S$ which saturates $I$, then by Hall's theorem there exists $X\subseteq I$ such that $|N(X)|<|X|$. (Here, the neighborhood of any subset of $I$ in the entire graph is the same as the neighborhood of that subset in $C$ because $I$ is an independent set.)
    
 Consider an inclusion-wise minimal such subset $X$ of $I$. Then for every $x\in X$, there exists a perfect matching from $X\setminus\{x\}$ to $N(X)$. If not, then there exists a $Y\subseteq X\setminus\{x\}$ such that $|N(Y)|<|Y|$ and as $Y\subseteq X\setminus \{x\}$ which means that $Y\subset X$, this contradicts the inclusion-wise minimality of $Y$. Hence there exists a perfect matching from $X\setminus \{x\}$ to $N(X)$ and this also means that $|X|=|N(X)|+1$. Next we will prove a claim which demonstrates how any vertex cover of $G$ intersects the vertices in $X\cup N(X)$. 

\begin{claim}\label{cl:vfromN(X)}
        Any minimum sized vertex cover of $G$ cannot contain more than $|N(X)|$ many vertices from $X\cup N(X)$.
 \end{claim}
\begin{proof}
Let $T$ be a minimum sized vertex cover of $G$ such that $T$ contains more than $|N(X)|$ many vertices from $X\cup N(X)$. Consider the set $T^{\prime}=\big(T\setminus(X\cup N(X))\big)\cup N(X)$. Clearly, the size of $T^{\prime}$ is less than the size of $T$. We show that $T^{\prime}$ is also a vertex cover of $G$ which contradicts the fact that $T$ is a minimum sized vertex cover of $G$. Any edge with both endpoints outside $X\cup N(X)$ is covered by $T$ and hence covered by $T^\prime$ because the vertices outside $X\cup N(X)$ which belong to $T$ also belong to $T^\prime$. Any edge with both endpoints in $X\cup N(X)$ is also covered by $T^\prime$ as $N(X)\subset T^\prime$ and no edge in $G$ can have both endpoints in $X$ as $X$ is an independent set. An edge with one endpoint in $X\cup N(X)$ and other endpoint outside $X\cup N(X)$ must have an endpoint in $N(X)$ because a vertex in $X$ cannot be adjacent to a vertex outside $N(X)$ (by the definition of $N(X)$). Thus such an edge will also be covered by $T^\prime$ as $N(X)\subseteq T^\prime$. Thus every edge of $G$ has at least one endpoint in $T^\prime$. This is a contradiction as $G$ cannot have a vertex cover of size smaller than $T$ and thus any minimum sized vertex cover of $G$ cannot contain more than $N(X)$ vertices from $X\cup N(X)$. \end{proof}

Now consider a vertex $x\in X$. Since $G$ is Spartan, there is a minimum sized vertex cover $S_x$ of $G$ such that $x\in S_x$ by \Cref{vinVC}. We have already shown that $X\setminus\{x\}$ has a perfect matching with $N(X)$. Thus any vertex cover must contain $|N(X)|$ many vertices from $X\setminus \{x\}\cup N(X)$. Thus $S_x$ contains $x$ and $|N(X)|$ many vertices from $X\setminus \{x\}\cup N(X)$. That is, $S_x$ contains $|N(X)|$ vertices from $X\cup N(X)$. This contradicts the above claim. Thus the existence of a Hall's violator set $X$ is not possible and there exists a matching from $I$ to $S=V(G)\setminus I$ which saturates $I$.
\end{proof}
This immediately gives us a simple corollary. 
\begin{corollary}
   If $G$ is a Spartan graph with more than one vertex, then $\mathsf{mvc}(G)\geq \nicefrac{n}{2}$. 
\end{corollary}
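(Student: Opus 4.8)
The plan is to read this off directly from \Cref{matchI}. I would start by fixing a maximum independent set $I$ of $G$; recall that $G$ is connected with more than one vertex, so the hypotheses of \Cref{matchI} are met. By the standard correspondence between maximum independent sets and minimum vertex covers, the complement $C := V(G) \setminus I$ is a minimum vertex cover of $G$, and hence $\mathsf{mvc}(G) = |C| = n - |I|$.

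Next I would apply \Cref{matchI} to obtain a matching from $I$ into $C$ that saturates $I$. A saturating matching injects $I$ into $C$, so $|I| \leqslant |C|$; combined with $|I| + |C| = n$ this gives $2|I| \leqslant n$, i.e.\ $|I| \leqslant \nicefrac{n}{2}$. Substituting back, $\mathsf{mvc}(G) = n - |I| \geqslant n - \nicefrac{n}{2} = \nicefrac{n}{2}$, which is exactly the claim.

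I do not expect any real obstacle here: all of the combinatorial content is already packaged into \Cref{matchI}, and the remaining steps are the elementary vertex-cover/independent-set duality together with the counting inequality $|I| \leqslant n - |I|$ furnished by the saturating matching. The only points requiring a moment's care are verifying that the hypotheses of \Cref{matchI} hold (that $G$ is Spartan and has more than one vertex) and observing that $C = V(G)\setminus I$ is indeed a \emph{minimum} vertex cover precisely because $I$ is a \emph{maximum} independent set.
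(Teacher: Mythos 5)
Your proof is correct and follows essentially the same route as the paper: both apply \Cref{matchI} to a maximum independent set $I$ to get $|I|\leqslant |V(G)\setminus I|$, hence $|I|\leqslant \nicefrac{n}{2}$, and then conclude via the independent-set/vertex-cover duality that $\mathsf{mvc}(G)=n-|I|\geqslant \nicefrac{n}{2}$.
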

\begin{proof}
    Let $G$ be a Spartan graph with more than one vertex and $I$ be a maximum independent set of $G$. Let $S=V(G)\setminus I$. Then $|N(I)|\geq|I|$ by \Cref{matchI} and thus $|I|\leq \nicefrac{|V(G)|}{2}$ which means that $|S|\geq \nicefrac{|V(G)|}{2}$ i.e. $\mathsf{mvc}(G)\geq \nicefrac{|V(G)|}{2}$.
\end{proof}

\begin{definition}
    A vertex cover $S$ of a graph $G$ is said to be a \emph{weakly good} vertex cover if it satisfies the following:
    Let $I=V(G)\setminus S$ be the corresponding independent set. For every $T\subseteq I$, if $C_1,C_2,\ldots C_\ell$ are the connected components of $G[V(G)\setminus T]$, we must have $|V(C_i)\cap S|>\mathsf{mvc}(C_i)$ (for all $i\in [1,\ell]$). 
\end{definition}

When we are dealing with more than $\mathsf{mvc}(G)$ many guards and we are in the model of the game where more than one guard is able to occupy a single vertex, then we will need to look at configurations of guards, not just vertex covers or vertex sets. A configuration of guards is a description of how many guards are present on each vertex of the graph $G$. A configuration can also be viewed as a function which maps each vertex to a non-negative integer such that the sum of the values of this function on all the vertices of $G$ is equal to the number of guards. At some places, we will also view a configuration as a multi-set where a vertex appears as many times as the number of guards present on it. The manner in which we are using the word configuration will be clear from context. When we say a configuration of guards $\mathcal{C}$ on a vertex cover $S$, we mean that the vertices having one or more guards in $\mathcal{C}$ (i.e. a non-zero value for the configuration function) are precisely the ones in $S$. 

\begin{definition}
    A configuration of guards $\mathcal{C}$ on a vertex cover $S$ is said to be a \emph{weakly good} configuration if it satisfies the following:
    Let $I=V(G)\setminus S$ be the corresponding independent set. For every $T\subseteq I$, if $C_1,C_2,\ldots C_\ell$ are the connected components of $G[V(G)\setminus T]$, the number of guards in $C_i$ must be greater than $\mathsf{mvc}(C_i)$ (for all $i\in [1,\ell]$). 
\end{definition}

If the number of guards on $G$ is the same as $\mathsf{mvc}(G)$, or we are working in the model where only one guard per vertex is allowed, then the vertex cover formed by vertices occupied by the guards in a weakly good configuration is a weakly good vertex cover. It is easier to see the notion of a weakly good vertex cover (or configuration) by looking at a vertex cover (or configuration) which is not weakly good. 

\begin{definition}
Let $S$ be a vertex cover of a graph $G$ such that $S$ is not weakly good. This means that there exists a non-empty $T\subseteq I=V(G)\setminus S$ such that if $C_1,C_2,\ldots C_\ell$ are the connected components of $G[V(G)\setminus T]$, we must have $|V(C_i)\cap S|=\mathsf{mvc}(C_i)$ for some $C_i$ (where $i\in [1,\ell]$). Then we will denote $T$ as a \emph{weakly bad set} corresponding to $S$.
\end{definition}

\begin{definition}
Let $S$ be a vertex cover of a graph $G$ and let $\mathcal{C}$ be a configuration of guards on $S$ such that $\mathcal{C}$ is not weakly good. This means that there exists a non-empty $T\subseteq I=V(G)\setminus S$ such that if $C_1,C_2,\ldots C_\ell$ are the connected components of $G[V(G)\setminus T]$, there must exist a $C_i$ (where $i\in [1,\ell]$) such that the number of guards in $C_i$ must be equal to $\mathsf{mvc}(C_i)$ (cannot be less than $\mathsf{mvc}(C_i)$ because $S$ is a vertex cover). Then we will denote $T$ as a \emph{weakly bad set} corresponding to $S$.
\end{definition}

For every vertex cover $S$ which is not weakly good, there must exist at least one corresponding weakly bad set $T\subset V(G)\setminus S$. Similarly for every configuration $\mathcal{C}$ which is not weakly good, there must exist at least one corresponding weakly bad set $T\subset V(G)\setminus S$. Conversely, if for a vertex cover $S$ (or a configuration $\mathcal{C}$ occupying the vertices in a vertex cover $S$), there exists a subset $T$ of $V(G)\setminus S$ which is weakly bad, then the vertex cover $S$ (or the configuration $\mathcal{C}$) is not weakly good.

\begin{lemma}\label{wgoodVC}
A graph $G$ has $\mathsf{evc}(G)=k$ only if for each $v\in V(G)$, there exists a weakly good configuration $\mathcal{C}_v$ corresponding to a vertex cover $S_v$ such that $\mathcal{C}_v$ has $k$ guards and $v\in S_v$. In particular, a graph $G$ is Spartan only if for each $v\in V(G)$, there exists a minimum sized and weakly good vertex cover $S_v$ such that $v\in S_v$.
\end{lemma}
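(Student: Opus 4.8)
The plan is to prove the contrapositive: if for some vertex $v \in V(G)$ every minimum-sized vertex cover containing $v$ fails to be weakly good (equivalently, fails to admit a weakly good configuration with $k = \mathsf{evc}(G)$ guards), then the attacker has a winning strategy, so $\mathsf{evc}(G) > \mathsf{mvc}(G)$ — and more precisely $\mathsf{evc}(G) \neq k$ for the claimed value. Actually, the cleaner statement to establish first is the general one: $\mathsf{evc}(G) = k$ only if for each $v$ there is a weakly good configuration on some vertex cover $S_v$ with $k$ guards and $v \in S_v$; the Spartan statement is then the special case $k = \mathsf{mvc}(G)$, where a weakly good configuration with $\mathsf{mvc}(G)$ guards must have exactly one guard per vertex of $S_v$ and hence $S_v$ itself is a weakly good vertex cover.

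First I would fix $v$ and suppose, for contradiction, that $\mathsf{evc}(G) = k$ but every vertex cover $S$ with $v \in S$ that carries a configuration of $k$ guards admits a weakly bad set. The attacker's plan is: from the current configuration $\mathcal{C}$ (a vertex cover, since the defender has survived so far), attack an edge incident to $v$ to force the defender into a configuration $\mathcal{C}'$ with $v$ occupied; by \Cref{vinVC}-style reasoning this is possible since otherwise the attacker already wins. Now $\mathcal{C}'$ sits on a vertex cover $S' \ni v$, so by hypothesis there is a weakly bad set $T \subseteq V(G) \setminus S'$: a component $C_i$ of $G[V(G) \setminus T]$ holds exactly $\mathsf{mvc}(C_i)$ guards. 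The key step is then to argue that the attacker can attack an edge between $C_i$ and $T$ to pull a guard out of $C_i$, dropping it to $\mathsf{mvc}(C_i) - 1$ guards, and that no legal defender move can simultaneously restore coverage inside $C_i$ and defend the attack: any guard entering $C_i$ must come from a vertex of $T$ adjacent to $C_i$, but in one move at most one such guard enters while the attacked edge itself consumes the only "incoming" move, or more carefully, a counting/separator argument shows $C_i$ cannot both absorb a replacement and have its internal vertex cover repaired. After this attack, some edge inside $G[C_i]$ is uncovered (its guard count is below $\mathsf{mvc}(C_i)$), and the attacker attacks that edge next round to win.

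The main obstacle — and the part requiring genuine care — is the second half of that argument: showing that once a guard is forced out of the "tight" component $C_i$, the defender genuinely cannot recover, accounting for the fact that the defender may move many guards at once and guards from $T$ may flow in. The subtlety is that after removing one guard, $C_i$ has $\mathsf{mvc}(C_i) - 1$ guards, which is too few to cover $G[C_i]$; a replacement guard would have to enter from $T$, but the edge being defended is the one from $C_i$ to $T$, and one must check that the guard defending that attacked edge (which moves from $C_i$ into $T$, or within the $C_i$–$T$ cut) together with any guards the defender tries to push back into $C_i$ still leaves $G[C_i]$ with fewer than $\mathsf{mvc}(C_i)$ guards, or leaves some edge in $G[C_i]$ uncovered. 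I would handle this by an invariant argument on the count of guards in $C_i$ versus $\mathsf{mvc}(C_i)$ across the attacker's two-move sequence, using that each guard moves at most once so the net inflow into $C_i$ is bounded, and that the attacked $C_i$–$T$ edge "uses up" the move of the guard leaving $C_i$.

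Finally, to conclude the Spartan specialization: if $\mathsf{evc}(G) = \mathsf{mvc}(G)$, the general claim gives for each $v$ a weakly good configuration $\mathcal{C}_v$ with $\mathsf{mvc}(G)$ guards on a vertex cover $S_v \ni v$; since $|S_v| \geq \mathsf{mvc}(G)$ and $\mathcal{C}_v$ places all $\mathsf{mvc}(G)$ guards on $S_v$, we get $|S_v| = \mathsf{mvc}(G)$ with exactly one guard per vertex, so the defining condition of a weakly good configuration reads precisely $|V(C_i) \cap S_v| > \mathsf{mvc}(C_i)$ for all components $C_i$ of every $G[V(G) \setminus T]$, which is exactly the definition of $S_v$ being a weakly good vertex cover. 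I would also remark that the converse direction is not claimed here — this lemma is only a necessary condition — so no further work is needed.
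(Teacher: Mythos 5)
Your overall plan matches the paper's: force a guard onto $v$, use the hypothesis to extract a weakly bad set $T$ for the resulting configuration, attack a $C_i$--$T$ edge to pull a guard out of the tight component, and finish on the newly vulnerable edge. However, there is a genuine gap at exactly the step you flag as the ``main obstacle.'' You worry that ``guards from $T$ may flow in'' and propose to handle this with an unspecified invariant/counting argument about bounded net inflow into $C_i$. As stated, that repair does not work: if even one guard could enter $C_i$ from $T$ during the defensive move (while the guard on $w$ leaves along the attacked edge), the guard count in $C_i$ would remain $\mathsf{mvc}(C_i)$ and no edge of $G[C_i]$ need become vulnerable, so ``the net inflow is bounded'' proves nothing. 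Likewise, ``the attacked edge consumes the only incoming move'' is not a valid principle --- the defender may move arbitrarily many guards in one round, each by one step.

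The missing observation, which the paper uses and which makes the step immediate, is definitional: a weakly bad set $T$ is by definition a subset of $V(G)\setminus S_v$, i.e.\ it consists entirely of \emph{unoccupied} vertices, and since distinct components of $G[V(G)\setminus T]$ are pairwise non-adjacent, every neighbour of $C_i$ outside $C_i$ lies in $T$. Hence in the round where the attacker attacks an edge $wu$ with $w\in V(C_i)$ and $u\in T$, the guard on $w$ is forced onto $u$ (as $u$ is unoccupied), and \emph{no} guard whatsoever can enter $C_i$, because the only entry points are from $T$, which carries no guards. The component then has strictly fewer than $\mathsf{mvc}(C_i)$ guards, so some edge inside $G[C_i]$ is uncovered and the attacker wins on the next attack. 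With this observation your argument closes and coincides with the paper's proof; your treatment of the Spartan specialization (one guard per vertex when $k=\mathsf{mvc}(G)$, so weakly good configuration reduces to weakly good vertex cover) is fine and matches the paper's remarks.
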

\begin{proof}
Suppose a graph $G$ has $\mathsf{evc}(G)=k$ and there exists a vertex $v$ such that $v\in V(G)$ and there does not exist a weakly good configuration with $k$ guards with a guard on $v$.  Without loss of generality, we can assume that the defender starts with a configuration with a guard on $v$ because the attacker can always attack an edge adjacent to $v$ and ensure that at least one guard comes to $v$. As there exists no weakly good configuration with $k$ guards with a guard on $v$, tthe configuration $\mathcal{C}$ formed by the guards cannot be a weakly good configuration. If the guards do not occupy a vertex cover, some edge must be vulnerable and the attacker can attack that edge. So we can assume that the set $S$ of vertices occupied by the guards the configuration $\mathcal{C}$ forms a vertex cover. Therefore there exists a weakly bad subset $T$ of the unoccupied vertices such that if $C_1,C_2,\ldots C_\ell$ are the connected components of $G[V(G)\setminus T]$, we must have a $C_i$ (where $i\in [1,\ell]$ which contains $\mathsf{mvc}(C_i)$ many guards. Now since the components are not adjacent to each other, there exists an edge from a vertex $w\in V(C_i)$ to a vertex $u\in T$ because the graph $G$ is connected. Since no vertex in $T$ has a guard, the guard on $w$ must come to $u$. Therefore some edge in $C_i$ will become vulnerable as there were only $\mathsf{mvc}(C_i)$ many guards in $V(C_i)$ and one guard has moved out of $C_i$ while no guard from outside can come to a vertex of $C_i$ in one step. Thus the attacker wins and this contradicts $\mathsf{evc}(G)=k$.
\end{proof}
\begin{lemma}\label{cut}
    Let $S$ be a minimum sized vertex cover of a graph $G$ such that there exists a cut vertex $x$ of $G$ such that $x\notin S$. Then $S$ cannot be a weakly good vertex cover.
\end{lemma}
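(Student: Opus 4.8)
The plan is to show that the singleton set $T=\{x\}$ is a \emph{weakly bad set} corresponding to $S$, which by definition means $S$ cannot be weakly good. Since $x$ is a cut vertex and $x\notin S$, we have $x\in I=V(G)\setminus S$, so $T=\{x\}$ is a legitimate non-empty subset of $I$. Let $C_1,\dots,C_\ell$ be the connected components of $G[V(G)\setminus\{x\}]=G-x$; because $x$ is a cut vertex, $\ell\geq 2$. It then suffices to exhibit some $C_i$ with $|V(C_i)\cap S|=\mathsf{mvc}(C_i)$.

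First I would record the two inequalities that drive the argument. On one hand, for each $i$ the set $S\cap V(C_i)$ is a vertex cover of $G[C_i]$, since every edge lying inside $C_i$ is covered by $S$ and has both endpoints in $V(C_i)$; hence $|V(C_i)\cap S|\geq \mathsf{mvc}(C_i)$. Moreover, since $x\notin S$ and the sets $V(C_1),\dots,V(C_\ell)$ partition $V(G)\setminus\{x\}$, we get $|S|=\sum_{i=1}^{\ell}|V(C_i)\cap S|$. On the other hand, choosing a minimum vertex cover $M_i$ of each $G[C_i]$ and forming $M=\{x\}\cup\bigcup_{i=1}^\ell M_i$ yields a vertex cover of $G$: every edge of $G$ is either incident to $x$, or lies entirely inside some $C_i$ (there are no edges between distinct components of $G-x$), so it is covered. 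Since the $M_i$ are pairwise disjoint and avoid $x$, this gives $\mathsf{mvc}(G)\leq 1+\sum_{i=1}^{\ell}\mathsf{mvc}(C_i)$.

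Combining these, and using that $|S|=\mathsf{mvc}(G)$ because $S$ is a minimum vertex cover, we obtain
$$\sum_{i=1}^{\ell}\big(|V(C_i)\cap S|-\mathsf{mvc}(C_i)\big)=\mathsf{mvc}(G)-\sum_{i=1}^{\ell}\mathsf{mvc}(C_i)\leq 1.$$
Each summand is a non-negative integer and there are at least two of them ($\ell\geq 2$), so at least one summand must equal $0$; that is, $|V(C_j)\cap S|=\mathsf{mvc}(C_j)$ for some $j\in[1,\ell]$. Therefore $\{x\}$ is a weakly bad set corresponding to $S$, and so $S$ is not a weakly good vertex cover.

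This is essentially a counting argument, so there is no serious obstacle; the only points needing a little care are (i) checking that $S\cap V(C_i)$ is genuinely a vertex cover of the induced subgraph $G[C_i]$ — including the degenerate case where $C_i$ is a single vertex, for which $\mathsf{mvc}(C_i)=0$ and the inequality holds trivially — and (ii) using that distinct components of $G-x$ are pairwise non-adjacent, so that no edge of $G$ is missed when forming the vertex cover $M$. Connectedness of $G$ enters implicitly through the facts that each $C_i$ is non-empty and that $\ell\geq 2$ is the correct count.
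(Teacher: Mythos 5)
Your proof is correct and follows essentially the same route as the paper: both take $T=\{x\}$, compare $|V(C_i)\cap S|$ with $\mathsf{mvc}(C_i)$ over the components of $G-x$, and use that $\{x\}$ together with minimum covers of the components is a vertex cover of $G$ to force equality in some component; the paper merely phrases this counting as a contradiction ("if all inequalities were strict, a strictly smaller vertex cover would exist"), while you argue it directly.
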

\begin{proof}
Let $x$ be a cut vertex of a graph $G$ and let $S$ be a vertex cover of $G$ which does not contain $x$. Then we show that $\{x\}\subset V(G)\setminus S$ is weakly bad which will imply that $S$ is not a weakly good vertex cover. Since $x$ is a cut vertex, there exist connected components $C_1, C_2,\ldots, C_{\ell}$ where $\ell\geq 2$ of $V(G)\setminus \{x\}$. If $|S\cap V(C_i)|>\mathsf{mvc}(C_i)$ for all $i\in [1,\ell]$, then $S^\prime=\{x\}\cup S_1\cup S_2 \cup S_{\ell}$ will be a vertex cover of $G$ of size $|S|-\ell+1$ (where $S_i$ is a vertex cover of $C_i$ of size $\mathsf{mvc}(C_i)$). This is not possible as $|S^\prime|<|S|$ and $S$ is a minimum sized vertex cover of $G$. Thus there exists some $i\in [1,\ell]$ such that $|S\cap V(C_i)|=\mathsf{mvc}(C_i)$ which implies that $\{x\}$ is a weakly bad set. Hence $S$ is not a weakly good vertex cover.
\end{proof}
Thus with \Cref{wgoodVC} and \Cref{cut} we get the result in \cite{BCFPRW2021} that states that if $\mathsf{evc}(G)=\mathsf{mvc}(G)$ then each vertex must belong to a minimum sized vertex cover which contains all the cut vertices. 

We next define a generalization of a bad set, which is used to define the notion of a strongly good vertex cover. We use this notion to give another necessary condition for a graph $G$ to be Spartan. In order to define this, we define the notion of compatible vertex sets (or vertex covers) in a graph. We also give analogous definitions of compatible configurations and strongly good configuration. This gives us a new lower bound for $\mathsf{evc}(G)$. 

\begin{definition}
    Two vertex sets (vertex covers) $S_1$ and $S_2$ of a graph $G$ are said to be \emph{compatible} if there exist $|S_1\cap S_2|-$many vertex disjoint paths between $S_1\setminus S_2$ and $S_2\setminus S_1$.
\end{definition}

\begin{definition}
    Two configurations $\mathcal{C}_1$ and $\mathcal{C}_2$ are said to be \emph{compatible} if there exist $|\mathcal{C}_1\cap \mathcal{C}_2|-$many vertex disjoint paths (counting multiplicity) between $\mathcal{C}_1\setminus \mathcal{C}_2$ and $\mathcal{C}_2\setminus \mathcal{C}_1$. Here we will view each configuration as a multi-set over the set of vertices $V(G)$. 
\end{definition}
Also when we say $\mathcal{C}_i\setminus \mathcal{C}_j$ we will also account for multiplicity of each vertex. That is, if $v$ occurs $5$ times in $\mathcal{C}_1$ and $3$ times in $\mathcal{C}_2$, $v$ will occur $2$ times in $\mathcal{C}_1\setminus \mathcal{C}_2$. Also when we say vertex disjoint paths between (counting multiplicity) between $\mathcal{C}_1\setminus \mathcal{C}_2$ and $\mathcal{C}_2\setminus \mathcal{C}_1$, we mean that if a vertex $v$ occurs multiple times in $\mathcal{C}_1\setminus \mathcal{C}_2$, each occurrence of $v$ will be an endpoint of exactly one path from $\mathcal{C}_1\setminus \mathcal{C}_2$. Similarly if a vertex $v$ occurs multiple times in $\mathcal{C}_2\setminus \mathcal{C}_1$, each occurrence of $v$ will be an endpoint of exactly one path to $\mathcal{C}_2\setminus \mathcal{C}_1$. If we consider the union of all the multisets formed by the intermediate vertices of each path, the number of times a vertex $v$ occurs in the union should be less than or equal to the number of times $v$ occurs in $\mathcal{C}_1\cap\mathcal{C}_2$.

It is clear that two configurations $\mathcal{C}_1$ and $\mathcal{C}_2$ are compatible if and only if there is a possible movement of guards from $\mathcal{C}_1$ to $\mathcal{C}_2$ where each guard moves at most one step. The guards can rearrange themselves by moving one step from $\mathcal{C}_1\setminus \mathcal{C}_2$ to $\mathcal{C}_2\setminus \mathcal{C}_1$ if and only if there exist $|\mathcal{C}_1\cap \mathcal{C}_2|-$many vertex disjoint paths (counting multiplicity) between $\mathcal{C}_1\setminus \mathcal{C}_2$ and $\mathcal{C}_2\setminus \mathcal{C}_1$.

The following is a known result shown in \cite{BCFPRW2021}, but we will prove it for the sake of completeness in the appendix.
\begin{lemma}\label{lem:PM}
    Two minimum sized vertex covers of a graph $G$ are always compatible.
\end{lemma}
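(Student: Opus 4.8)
The plan is to show that any two minimum-sized vertex covers $S_1$ and $S_2$ of $G$ are compatible, i.e.\ that there exist $|S_1 \cap S_2|$ vertex-disjoint paths between $S_1 \setminus S_2$ and $S_2 \setminus S_1$ avoiding no particular restriction other than disjointness. First I would reduce to the symmetric difference: set $A = S_1 \setminus S_2$, $B = S_2 \setminus S_1$, and $W = S_1 \cap S_2$, and note $V(G) \setminus (S_1 \cup S_2)$ is an independent set $I$ with no edges to $A$ or $B$ not passing through $W$ — more precisely, every edge of $G$ has an endpoint in $S_1$ and an endpoint in $S_2$, so there are no edges within $I$, no edges from $A$ to $I$ except via… actually an edge from $a \in A$ to $x \in I$ would be uncovered by $S_2$ (since $a \notin S_2$, $x \notin S_2$), a contradiction; likewise no edges $B$ to $I$. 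Hence the only edges touching $A \cup B \cup I$ outside $W$ are within $A \cup B \cup W$, and in fact $A$ and $B$ are each independent (an edge inside $A$ is uncovered by $S_2$), so the bipartite-like structure on $A \cup B$ together with $W$ is what matters. The key counting fact is $|A| = |B|$, which follows immediately from $|S_1| = |S_2| = \mathsf{mvc}(G)$.

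The heart of the argument is a Hall-type / König-type condition. I would build an auxiliary bipartite graph $H$ with parts $A$ and $B \cup W'$ where $W'$ is a suitable copy/expansion of $W$, or more directly argue: consider $G' = G[A \cup B \cup W]$ and the two vertex covers $S_1 \cap V(G') = A \cup W$ and $S_2 \cap V(G') = B \cup W$ of $G'$; both are minimum in $G'$ (any smaller cover of $G'$ would combine with $S_1 \cap I$ resp.\ to beat $\mathsf{mvc}(G)$). Now I want $|A|$-many disjoint $A$-$B$ paths in $G'$ whose internal vertices lie in $W$. Suppose not; then by Menger's theorem there is a vertex set $Z \subseteq W$ with $|Z| < |A|$ separating $A$ from $B$ in $G'$ (we may assume $Z \subseteq W$ since $A,B$ are independent and deleting $Z$ from $W$ suffices — no $A$-$B$ edge exists because such an edge $ab$ with $a \in A \setminus S_2$, and $b \in B$; actually $ab$ is covered by $S_1$ via $a$ and by $S_2$ via $b$, so $A$-$B$ edges can exist — I must handle these as length-one paths). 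Handling the direct $A$-$B$ edges: let $M$ be a maximum matching of $A$-$B$ edges; route those pairs by single edges, and recurse on the remainder. The main obstacle I expect is exactly this bookkeeping — ensuring the length-one paths (direct $A$-$B$ edges) and the longer paths through $W$ together form a disjoint system covering all of $A$ and all of $B$ with internal vertices confined to $W$; a Menger cut of size $< |A|$ in $G[A \cup B \cup W]$ between $A$ and $B$ would, when combined with replacing the $A$-side of the cut appropriately, produce a vertex cover of $G$ smaller than $\mathsf{mvc}(G)$, the desired contradiction.

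Concretely, the contradiction step runs as follows. Suppose fewer than $|A|$ disjoint $A$-$B$ paths exist in $G' = G[A\cup B \cup W]$ with internal vertices in $W$; by Menger, some $Z \subseteq V(G')$ with $|Z| < |A|$ meets every $A$-$B$ path. Write $Z = Z_A \cup Z_B \cup Z_W$ with $Z_A \subseteq A$, $Z_B \subseteq B$, $Z_W \subseteq W$. Let $A' \subseteq A \setminus Z_A$ be the vertices still reachable from $A \setminus Z_A$ in $G' - Z$, and let $B'$ be the $B$-vertices in the complementary side. Then $(A \setminus A') \cup Z_W \cup (\text{cover of the } B'\text{-side})$ can be assembled, together with $S_1 \cap I$ on the untouched part, into a vertex cover of $G$: concretely replace $A'$ by $B'$-side plus $Z_W$ — the point being $|A'| > |Z_W| + |Z_A| + |Z_B|$-adjusted count gives a strictly smaller cover than $S_1$. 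I would then invoke $\mathsf{mvc}(G) = |S_1|$ for the contradiction. Finally, the $|A|$ disjoint $A$-$B$ paths obtained, being internally disjoint and confined to $W$, use at most $|A| = |S_1 \cap S_2| - |W \setminus (\text{used})|$ — wait: they use at most $|W|$ internal vertices total and there are exactly $|A| = |S_1 \cap S_2|$... here I should be careful that the definition only demands $|S_1 \cap S_2|$-many paths, and since $|A| = |B| \le |S_1 \cap S_2|$ is not guaranteed, the "extra" pairing on $W$ vertices to themselves (trivial length-zero paths) makes up the count. This reconciliation — trivial paths on $W \cap W$ plus the $|A|$ genuine paths — completes the proof.

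Note: I would streamline the above in the actual write-up by working directly with Hall's theorem on the bipartite graph between $A$ and $B$ where $a \sim b$ iff there is an $a$-$b$ path in $G[A \cup \{a,b\} \cup W]$ internally in $W$ with the paths chosen disjointly — i.e.\ phrase the whole thing as: the deficiency version of Hall/Menger fails because a deficiency would yield a sub-$\mathsf{mvc}$ vertex cover, exactly as in \Cref{cl:vfromN(X)} of \Cref{matchI}. I expect the cleanest route mirrors that claim's swap argument almost verbatim, with $X \cup N(X)$ there playing the role of the Menger cut here.
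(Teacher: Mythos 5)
Your overall strategy (a deficiency/separator would contradict minimality of the covers) is sound and different from the paper's, but the decisive step is exactly the one you leave as ``bookkeeping,'' and the sketch you give of it does not go through as written. The set $S_1\cap I$ you invoke is empty (every vertex of $I=V\setminus(S_1\cup S_2)$ lies outside both covers), the ``cover of the $B'$-side'' is never specified, and the inequality behind ``$|A'|>|Z_W|+|Z_A|+|Z_B|$-adjusted count'' is neither stated precisely nor verified to yield a vertex cover of $G$. The detour about direct $A$--$B$ edges (``maximum matching, then recurse on the remainder'') is both unnecessary --- in Menger's theorem a direct edge is just a path with no internal vertex, so the separator automatically meets it --- and unjustified as a recursion. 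The closing device of ``length-zero paths on $W$'' is not legitimate either: the paths in the definition must join $S_1\setminus S_2$ to $S_2\setminus S_1$ (the count in the paper's definition should be read as $|S_1\setminus S_2|$ paths with internal vertices in $S_1\cap S_2$, which is how compatibility is used later), and padding with trivial paths inside $W$ proves nothing. For the record, your Menger route can be repaired in one line: if $Z$ with $|Z|<|A|$ separates $A=S_1\setminus S_2$ from $B=S_2\setminus S_1$ in $G[A\cup B\cup W]$, where $W=S_1\cap S_2$, then $W\cup Z$ is already a vertex cover of $G$ (by your own structural observations, every edge not incident to $W$ is an $A$--$B$ edge, and each such edge is an $A$--$B$ path that $Z$ must meet), of size at most $|W|+|Z|<|W|+|A|=\mathsf{mvc}(G)$, a contradiction; and since the resulting $|A|$ disjoint paths use all of $A\cup B$ as endpoints, their internal vertices are forced into $W$. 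But this argument is the entire content of the lemma, and you have not carried it out; your ``streamlined'' alternative (Hall's theorem on a graph whose adjacency is ``a disjoint path exists'') is circular, since disjointness of the paths is precisely what Hall does not give and Menger does.

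You also miss that a much stronger and simpler fact holds, and it is what the paper proves: there is a perfect matching of \emph{direct edges of $G$} between $A$ and $B$, so no routing through $W$ is needed at all. The paper's argument is two lines: $B\cup I$ is a maximum independent set (it is the complement of the minimum cover $S_1$); if Hall's condition failed for some $X\subseteq A$ with $|N(X)\cap B|<|X|$, then $X\cup(B\setminus N(X))\cup I$ would be independent --- using exactly the structural facts you derived (no $A$--$A$, $B$--$B$, $A$--$I$, $B$--$I$ edges) --- and strictly larger, a contradiction. This is the same swap as in \Cref{cl:vfromN(X)} of \Cref{matchI}, which you point to at the very end; pursuing that remark on independent sets rather than on separators would have landed you on the paper's proof and avoided the Menger machinery entirely.
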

\begin{proof}
Let $S_1$ and $S_2$ be two minimum vertex covers of a graph $G$. Let $T_1=S_1\setminus S_2$ and $T_2=S_2\setminus S_1$. Clearly, $|T_1|=|T_2|$. Let $S_3=V(G)\setminus (S_1\cup S_2)$, i.e., $S_3$ be the intersection of the independent sets corresponding to $S_1$ and $S_2$. That is, both $T_1\cup S_3$ and $T_2\cup S_3$ are maximum independent sets of $G$. We will show that there exists a perfect matching (set of disjoint paths of length $1$) between $T_1$ and $T_2$. Suppose not, then there exits a set $X\subset T_1$ such that $|N(X)\cap T_2|<|X|$. Now consider the set $X\cup (T_2\setminus N(X))\cup S_3$. This will form a larger independent set than a maximum independent set $T_2\cup S_3$ which is a contradiction. Therefore, there exits a perfect matching between $T_1$ and $T_2$ and hence $S_1$ and $S_2$ are compatible.  
\end{proof}
    
\begin{definition}
For a vertex cover $S$, a set $T\subset V(G)\setminus S$ is \emph{strongly bad} if there is some connected component $C_i$ of $V(G)\setminus T$ and some $v\in N(T)\cap V(C_i)$ such that no vertex cover of $C_i$ of size $S\cap V(C_i)-1$ is compatible with $S\cap V(C_i)\setminus \{v\}$. 
\end{definition}
\begin{definition}
    A vertex cover $S$ of a graph $G$ is said to be \emph{strongly good} if $V(G)\setminus S$ does not have a strongly bad subset.
\end{definition}
Now we analogously define a strongly good configuration.
\begin{definition}
For a configuration $\mathcal{C}$ of guards on a vertex cover $S$, a set $T\subset V(G)\setminus S$ is \emph{strongly bad} if there is some connected component $C_i$ of $V(G)\setminus T$ and some $v\in N(T)\cap V(C_i)$ such that no configuration on a vertex cover of $C_i$ of size $\mathcal{C}\cap V(C_i)-1$ is compatible with $S\cap V(C_i)\setminus \{v\}$. 
\end{definition}
\begin{definition}
    A configuration $\mathcal{C}$ on a vertex cover $S$ of a graph $G$ is said to be \emph{strongly good} if $V(G)\setminus S$ does not have a strongly bad subset.
\end{definition}
Now we make an observation about weakly good and strongly good configurations (vertex covers).
\begin{lemma}\label{lem:sgwg}
    A strongly good configuration (vertex cover) is a weakly good configuration (vertex cover).
\end{lemma}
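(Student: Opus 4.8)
The plan is to prove the contrapositive: if a configuration $\mathcal{C}$ on a vertex cover $S$ is \emph{not} weakly good, then it is not strongly good. So suppose $T \subseteq I = V(G) \setminus S$ is a weakly bad set, witnessed by a component $C_i$ of $G[V(G) \setminus T]$ with (number of guards in $C_i$) $= \mathsf{mvc}(C_i)$; equivalently, in the vertex-cover version, $|V(C_i) \cap S| = \mathsf{mvc}(C_i)$. I want to show this same $T$ is also strongly bad, which requires exhibiting a vertex $v \in N(T) \cap V(C_i)$ such that no vertex cover of $C_i$ of size $|S \cap V(C_i)| - 1 = \mathsf{mvc}(C_i) - 1$ is compatible with $(S \cap V(C_i)) \setminus \{v\}$.

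First I would observe that since $T$ is nonempty and $G$ is connected, there is an edge between $T$ and $V(C_i)$ (otherwise $C_i$ would be a component of $G$ itself, contradicting connectivity, or at least $C_i$ would be isolated from $T$ — in either case one argues $N(T) \cap V(C_i) \neq \emptyset$; if $C_i$ is the unique component this needs the connectivity of $G$ and nonemptiness of $T$). Fix any such $v \in N(T) \cap V(C_i)$; note $v \in S$ since $v$ has a neighbour in $T \subseteq I$ and $S$ is a vertex cover, so $v$ carries a guard. The key point is then purely about sizes: $(S \cap V(C_i)) \setminus \{v\}$ has exactly $\mathsf{mvc}(C_i) - 1$ vertices, while \emph{any} vertex cover of $C_i$ has size at least $\mathsf{mvc}(C_i)$. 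So a vertex cover of $C_i$ of size $\mathsf{mvc}(C_i) - 1$ simply does not exist, and hence the compatibility condition in the definition of ``strongly bad'' is vacuously triggered — there is no such vertex cover to be compatible with. Thus $T$ is strongly bad and $\mathcal{C}$ (resp.\ $S$) is not strongly good.

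The main thing to be careful about is the size bookkeeping and the exact reading of the definition of ``strongly bad'': I need ``no vertex cover of $C_i$ of size $|\mathcal C\cap V(C_i)|-1$ is compatible with $S\cap V(C_i)\setminus\{v\}$'' to hold \emph{because the set of such vertex covers is empty}, and I should double-check that when $\mathcal{C}$ has exactly as many guards on $C_i$ as $\mathsf{mvc}(C_i)$ the quantity $|\mathcal C \cap V(C_i)| - 1$ is indeed $\mathsf{mvc}(C_i) - 1$ and hence below the vertex-cover threshold. The only genuinely non-trivial step is ensuring $N(T) \cap V(C_i) \neq \emptyset$, which I would pin down using connectivity of $G$ together with $T \neq \emptyset$; everything else is a one-line counting argument. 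I would also remark that the configuration version and the vertex-cover version of the statement are handled identically, replacing ``$|V(C_i) \cap S|$'' by ``number of guards on $C_i$'' throughout.
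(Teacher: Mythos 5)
Your proposal is correct and takes essentially the same route as the paper: both prove the contrapositive, obtain a vertex $v\in N(T)\cap V(C_i)$ from the connectivity of $G$ and the nonemptiness of $T$, note $v\in S$ since $S$ is a vertex cover, and conclude that the weakly bad set $T$ is also strongly bad by the same size count at $\mathsf{mvc}(C_i)-1$. The only cosmetic difference is that you read the strongly-bad condition as vacuously satisfied (no vertex cover of $C_i$ of size $\mathsf{mvc}(C_i)-1$ exists), whereas the paper phrases it as the leftover configuration with $\mathsf{mvc}(C_i)-1$ guards being incompatible with every vertex cover of $C_i$ --- the same counting argument.
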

\begin{proof}
We will prove the contrapositive of this statement, i.e., a configuration on a vertex cover which is not a weakly good configuration is not a strongly good configuration. Let $\mathcal{C}$ be a configuration on a vertex cover $S$ such that $\mathcal{C}$ is not weakly good. Therefore, there exists a non-empty $T\subseteq I=V(G)\setminus S$ such that if $C_1,C_2,\ldots C_\ell$ are the connected components of $G[V(G)\setminus T]$, there must exist a $C_i$ (where $i\in [1,\ell]$) such that the number of guards in $C_i$ (in $\mathcal{C}$) must be equal to $\mathsf{mvc}(C_i)$. Since $T$ is non-empty and the graph $G$ is connected, there exists $v\in V(C_i)$ such that $v$ is adjacent to $T$ and since $S$ is a vertex cover $v$ is in $S$ (and has exactly one guard in $\mathcal{C}$ as the number of guards in $C_i$ in $\mathcal{C}$ is equal to $\mathsf{mvc}(C_i)$). The configuration $\mathcal{C}\cap C_i\setminus \{v\}$ is not compatible with any vertex cover of $C_i$ as the number of guards in $\mathcal{C}\cap C_i\setminus \{v\}$ is less than $\mathsf{mvc}(C_i)$. Thus the set $T$ is a strongly bad set and $\mathcal{C}$ is not a strongly good configuration. 

The same proof works for vertex covers if we consider a vertex cover as a configuration with one guard per vertex.
\end{proof}

\begin{lemma}\label{sgoodVC}
A graph $G$ has $\mathsf{evc}(G)=k$ only if for each $v\in V(G)$, there exists a $k-$sized strongly good configuration $\mathcal{C}_v$ on a vertex cover $S_v$ such that $v\in S_v$. In particular, a graph $G$ is Spartan if for each $v\in V(G)$, there exists a minimum sized and strongly good vertex cover $S_v$ such that $v\in S_v$. 
\end{lemma}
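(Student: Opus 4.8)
The plan is to follow the template of the proof of \Cref{wgoodVC}, replacing ``weakly bad'' by ``strongly bad'' and exploiting the reachability (compatibility) requirement built into the definition of a strongly bad set. Suppose $\mathsf{evc}(G)=k$ but there is a vertex $v$ for which no $k$-sized strongly good configuration puts a guard on $v$. Exactly as in \Cref{wgoodVC}, we may assume that the defender's current configuration $\mathcal{C}$ has $k$ guards and places a guard on $v$, since the attacker can always attack an edge incident to $v$ and force a guard onto $v$. If the occupied vertices do not form a vertex cover then some edge is vulnerable and the attacker wins, so let $S$ be the occupied set and assume $S$ is a vertex cover. By hypothesis $\mathcal{C}$ is not strongly good, so there is a strongly bad set $T\subseteq V(G)\setminus S$: a connected component $C$ of $G[V(G)\setminus T]$ and a vertex $w\in N(T)\cap V(C)$ such that no configuration with $|\mathcal{C}\cap V(C)|-1$ guards whose support is a vertex cover of $C$ is compatible with $\mathcal{C}\cap V(C)\setminus\{w\}$. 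Note that $T\neq\emptyset$, since otherwise $N(T)=\emptyset$ and no such $w$ could exist.

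Next I would have the attacker attack an edge $wu$ with $u\in T$ (such an edge exists because $w\in N(T)$). Since $u$ is unoccupied, the attack can only be defended by moving the guard on $w$ to $u$; moreover no guard from outside $C$ can move into $C$ in a single step, because every vertex outside $V(C)$ that is adjacent to $V(C)$ lies in $T$, and $T$ carries no guards. Hence after this round at most $|\mathcal{C}\cap V(C)|-1$ guards remain in $C$, and the most favourable situation for the defender is to keep exactly $|\mathcal{C}\cap V(C)|-1$ guards in $C$ and rearrange $\mathcal{C}\cap V(C)\setminus\{w\}$ by one-step moves within $C$ into a configuration whose support is a vertex cover of $C$ --- any strategy that lets further guards leak from $C$ into $T$ only leaves $C$ with fewer guards, which can only hurt. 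But such a one-step rearrangement is precisely a $(|\mathcal{C}\cap V(C)|-1)$-guard configuration supported on a vertex cover of $C$ that is compatible with $\mathcal{C}\cap V(C)\setminus\{w\}$, and no such configuration exists by the choice of $T$. Therefore some edge of $C$ is left uncovered; the attacker attacks it in the following round and wins, contradicting $\mathsf{evc}(G)=k$. For the ``in particular'' clause one takes $k=\mathsf{mvc}(G)$: a $k$-sized configuration whose support is a vertex cover must have support of size exactly $\mathsf{mvc}(G)$ with one guard per vertex, so it coincides with a minimum sized vertex cover, and (as noted in the proof of \Cref{lem:sgwg}) in this one-guard-per-vertex regime a configuration is strongly good exactly when the underlying vertex cover is; hence a $k$-sized strongly good configuration with a guard on $v$ is the same thing as a minimum sized strongly good vertex cover containing $v$.

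The step I expect to be the main obstacle is making the ``leaking into $T$ cannot help'' claim fully rigorous: one must check that if some defender response leaves $j<|\mathcal{C}\cap V(C)|-1$ guards in $C$ that nonetheless cover every edge of $C$, then instructing the guards that would have left $C$ to stay put instead yields a legal one-step move producing a $(|\mathcal{C}\cap V(C)|-1)$-guard configuration supported on a vertex cover of $C$ and compatible with $\mathcal{C}\cap V(C)\setminus\{w\}$. This relies on the facts that enlarging a vertex cover keeps it a vertex cover and that, in the multi-guard model, adding these extra stationary guards does not disturb the vertex-disjoint-paths certificate witnessing compatibility. Once this reduction is in place, the remainder is a direct adaptation of the argument in \Cref{wgoodVC}.
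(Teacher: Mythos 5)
Your argument is correct and follows essentially the same route as the paper's proof: force a guard onto $v$, extract a strongly bad set $T$ with component $C$ and vertex $w\in N(T)\cap V(C)$, attack an edge from $w$ into $T$, and use the non-existence of a $(|\mathcal{C}\cap V(C)|-1)$-guard configuration on a vertex cover of $C$ compatible with the remaining guards to leave a vulnerable edge, contradicting $\mathsf{evc}(G)=k$. Your extra step ruling out responses that let additional guards leak from $C$ into $T$ (by having them stay put instead) merely makes explicit a detail the paper's proof leaves implicit, so there is no substantive difference in approach.
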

Suppose that there exists a graph $G$ with $\mathsf{evc}(G)=k$ and a vertex $v$ such that there exists no $k-$sized strongly good configuration on a vertex cover containing $v$. Without loss of generality we can assume that the initial configuration has at least one guard on $v$ because if not, the attacker can force a guard to come to $v$ by attacking an edge adjacent to $v$. If the vertices occupied by the guards in the resulting configuration do not form a vertex cover, then there exists some edge which is vulnerable and hence the attacker wins. Therefore, the resulting configuration must have guards on a vertex cover. Since there is no $k-$sized strongly good configuration $\mathcal{C}_v$ on a vertex cover $S_v$ such that $v\in S_v$, for the configuration $\mathcal{C}$ formed by the guards on the vertex cover $S$, there exists a strongly bad subset $T\subset V(G)\setminus S$. Hence there is some connected component $C_i$ of $V(G)\setminus T$ and some $u\in N(T)\cap V(C_i)$ such that no configuration on a vertex cover of $C_i$ of size $\mathcal{C}\cap V(C_i)-1$ is compatible with $S\cap V(C_i)\setminus \{u\}$. Suppose the attacker attacks an edge joining $u$ to a vertex in $T$, the guard on $u$ is forced to move to $T$. No guard from a vertex out side $C_i$ can come to $C_i$ and the guards in $C_i$ cannot rearrange themselves to form a vertex cover of $C_i$. Therefore some edge will be vulnerable no matter how the guards rearrange themselves which contradicts $\mathsf{evc}(G)=k$.

Therefore, a graph $G$ has $\mathsf{evc}(G)=k$ only if for each $v\in V(G)$, there exists a $k-$sized strongly good configuration $\mathcal{C}_v$ on a vertex cover $S_v$ where $v\in S_v$. 

\section{K\"onig Graphs}\label{sec:konig}

A graph $G$ is said to be a \emph{K\"onig} graph if the size of the smallest vertex cover of $G$ is equal to the size of the largest matching of $G$, i.e. $\mathsf{mvc}(G)=mm(G)$. This class is a natural and strict{\footnote{There are K\"onig graphs which are non-bipartite as well. For example consider $G$ with $V(G)=\{a_1,b_1,a_2,b_2\}$ and $E(G)=\{a_1a_2,a_1b_1,a_2b_2,a_1b_2,a_2b_1\}$. This graph has a perfect matching $\{a_1b_1,a_2b_2\}$ and thus $mm(G)=2$ and thus $\mathsf{mvc}(G)\geq 2$. But $\{a_1,a_2\}$ forms a vertex cover of $G$ of size $2$ thus $\mathsf{mvc}(G)=mm(G)$ and $G$ is not bipartite because it contains a triangle.}}  generalization of bipartite graphs. In this section, we derive the necessary and sufficient condition for a K\"onig Graph to be Spartan. Before that we will show a corollary of \Cref{wgoodVC} which gives a sense of how a Spartan graph will look like.

\begin{corollary}\label{nbdI}
   If a graph $G$ with more than one vertex is Spartan and $I$ is an independent set of $G$ which is not maximal, then $|N(I)|>|I|$.  
\end{corollary}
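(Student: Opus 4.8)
The plan is to prove the contrapositive: assume $G$ is a connected Spartan graph on $n>1$ vertices and that $I\neq\emptyset$ is an independent set with $|N(I)|\leqslant|I|$, and conclude that $I$ must be maximal.

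\emph{Step 1: a Hall reduction forcing $|N(I)|=|I|$ and a perfect matching.} I first claim $I$ satisfies Hall's condition towards $N(I)$. Suppose not and let $X\subseteq I$ be an inclusion-minimal set with $|N(X)|<|X|$; note $N(X)\subseteq V(G)\setminus I$ since $I$ is independent, so $N(X)\subseteq N(I)$. Exactly as in the proof of \Cref{matchI}, minimality gives $|X|=|N(X)|+1$ and, for each $x\in X$, a perfect matching between $X\setminus\{x\}$ and $N(X)$; and exactly as in \Cref{cl:vfromN(X)} one gets that no minimum vertex cover of $G$ contains more than $|N(X)|$ vertices of $X\cup N(X)$. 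But for $x\in X$, \Cref{vinVC} supplies a minimum vertex cover $S_x\ni x$, and covering the perfect matching between $X\setminus\{x\}$ and $N(X)$ costs $S_x$ at least $|N(X)|$ vertices of $(X\setminus\{x\})\cup N(X)$, so $S_x$ contains at least $|N(X)|+1$ vertices of $X\cup N(X)$ --- a contradiction. Hence Hall's condition holds, $|N(I)|=|I|$, and there is a perfect matching $M$ between $I$ and $N(I)$. As a byproduct: any minimum vertex cover $S$ meets $I\cup N(I)$ in at least $|N(I)|$ vertices (one per edge of $M$), so $\big(S\setminus(I\cup N(I))\big)\cup N(I)$ is again a minimum vertex cover and is disjoint from $I$, whence $I$ extends to a maximum independent set and \Cref{matchI} becomes applicable.

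\emph{Step 2: set-up assuming $I$ is not maximal.} Suppose for contradiction that $I$ is not maximal, i.e.\ $R:=V(G)\setminus(I\cup N(I))\neq\emptyset$; every $v\in R$ is non-adjacent to $I$, so $I\cup\{v\}$ is independent. If some $v\in R$ had $N(v)\subseteq N(I)$ then $I\cup\{v\}$ would be an independent set with $|N(I\cup\{v\})|=|N(I)|=|I|<|I\cup\{v\}|$, contradicting Step 1 applied to $I\cup\{v\}$; hence every vertex of $R$ has a neighbour in $R$, and $N(I)$ is a vertex cut separating $I$ from $R$. A short computation with $M$ then gives $\mathsf{mvc}(G)=|I|+\mathsf{mvc}(G[R])$, and $N(I)\cup S_R$ is a minimum vertex cover of $G$ whenever $S_R$ is a minimum vertex cover of $G[R]$.

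\emph{Step 3 (the main obstacle): deriving a contradiction from \Cref{wgoodVC}.} Since $G$ is Spartan, \Cref{wgoodVC} gives a minimum weakly good vertex cover $S$ (one may prescribe it to contain a chosen vertex of $R$). By Step 1, $|S\cap(I\cup N(I))|=|N(I)|$, and since $S$ is a vertex cover each edge of $M$ has an endpoint in $S$; counting edges of $M$ against the $|N(I)|$ vertices of $S$ in $I\cup N(I)$ forces every edge of $M$ to have \emph{exactly} one endpoint in $S$, so $M$ matches $(I\cup N(I))\cap S$ bijectively with the independent set $T:=(I\cup N(I))\setminus S\subseteq V(G)\setminus S$, which is non-empty. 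The crux is to show that $T$ (or a closely related subset of $V(G)\setminus S$) is a weakly bad set for $S$: after deleting $T$, the component $C$ that contains $R$ retains the guards of $S$ lying in it, and these cannot beat a minimum vertex cover of $C$, so $|V(C)\cap S|=\mathsf{mvc}(C)$, contradicting weak goodness. I expect the delicate point to be exactly the verification that $S$ restricted to this component is of minimum size: stray isolated vertices of $I\cap S$ produced by the deletion are harmless and must be peeled off separately, and one argues that if $S$ were \emph{not} tight on $C$ then $T$ together with minimum vertex covers of the components of $G-T$ would assemble into a vertex cover of $G$ of size below $\mathsf{mvc}(G)$, which is impossible unless $R=\emptyset$. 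This contradiction completes the proof.
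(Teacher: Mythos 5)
Your Steps 1 and 2 are sound: the Hall-violator argument of \Cref{matchI} and \Cref{cl:vfromN(X)} does transfer to an arbitrary independent set, and the decomposition $\mathsf{mvc}(G)=|I|+\mathsf{mvc}(G[R])$ with $N(I)\cup S_R$ optimal is correct. The genuine gap is exactly at the point you flag as the crux in Step 3, and it is not merely delicate --- the claim is false as stated: from the derived properties of $S$ (a weakly good minimum cover containing a prescribed vertex of $R$, meeting $I\cup N(I)$ in exactly one endpoint of each $M$-edge and $R$ in exactly $\mathsf{mvc}(G[R])$ vertices) one cannot conclude that $T=(I\cup N(I))\setminus S$, nor any subset of $V(G)\setminus S$, is weakly bad. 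Take $V(G)=\{i_1,i_2,a_1,a_2,r_1,r_2\}$ with edges $i_1a_1,\ i_1a_2,\ i_2a_2,\ a_1r_1,\ a_2r_1,\ r_1r_2$, and $I=\{i_1,i_2\}$, so $N(I)=\{a_1,a_2\}$, $M=\{i_1a_1,i_2a_2\}$, $R=\{r_1,r_2\}$, $\mathsf{mvc}(G)=3$. The cover $S=\{i_1,a_2,r_1\}$ contains the $R$-vertex $r_1$ and satisfies every tightness conclusion of your Steps 1--3, yet it is weakly good: checking all seven nonempty subsets of $V(G)\setminus S=\{i_2,a_1,r_2\}$, every surviving component carries strictly more $S$-vertices than its vertex cover number; in particular deleting your $T=\{i_2,a_1\}$ leaves the single path $i_1\,a_2\,r_1\,r_2$ with three guards against $\mathsf{mvc}=2$. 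Your fallback counting argument also fails quantitatively: the assembled cover has size $|T|+\sum_i\mathsf{mvc}(C_i)=2+2=4>3=\mathsf{mvc}(G)$, because $|T|=|I|$ can exceed the total slack (one unit per component), so no contradiction is produced. Since this configuration exhibits every hypothesis your deduction actually uses, the chain of implications cannot be completed.

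What is missing is the rigidity that pins the chosen cover down on $I\cup N(I)$, and this is what the paper's proof supplies. It passes to an inclusion-wise minimal independent set with $|N(I)|=|I|$ (still non-maximal), applies \Cref{degone} to conclude that the bipartite graph $H$ between $I$ and $C=N(I)$ is elementary, so $I$ and $C$ are the \emph{only} minimum covers of $H$; it then takes a weakly good minimum cover of $G$ through a vertex of $I$ (not of $R$). Elementarity forces this cover to contain all of $I$ and none of $C$, so the deleted set is exactly $C=N(I)$: the components of $G-C$ are isolated vertices of $I$ together with the components of $G[R]$, and the latter carry exactly $\mathsf{mvc}(G[R])$ guards in total, yielding the weakly bad set and the contradiction with \Cref{wgoodVC}. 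In the example above this is visible: the minimal choice is $I'=\{i_2\}$, $C=\{a_2\}$, the only minimum cover through $i_2$ is $\{i_1,i_2,r_1\}$, and deleting $a_2$ leaves the path $i_1\,a_1\,r_1\,r_2$ with two guards against $\mathsf{mvc}=2$. Your use of minimality inside the Hall argument of Step 1 serves a different purpose and does not deliver this structure, so the proposal, as written, does not prove the corollary.
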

\begin{proof}
    Let $G$ be a graph with more than one vertex which is Spartan and let $I$ be an independent set of $G$ which is not maximal such that $|N(I)|=|I|$. Denote $N(I)$ by $C$. Consider an inclusion wise minimal such set $I$. The graph $H$ with vertex set $I\cup C$ and edge set $E(G[I\cup C])\setminus E(C)$ must be bipartite (because $I$ is an independent set) and elementary (using the inclusion wise minimality of $I$ and \Cref{degone}). Therefore, using an alternate definition of elementary bipartite graph from \cite{LLPM-matchingbook}, $I$ and $C$ are the only two minimum sized vertex covers of $H$. Now, any minimum vertex cover $S$ if $G$ cannot contain more than $|I|$ many vertices from $G[I\cup C]$. Suppose not, $S\setminus (I\cup C)\cup C$ is a strictly smaller vertex cover of $G$ (this is a vertex cover as there are no edges from $I$ to $V(G)\setminus C$ and hence $S\setminus C$). 

Now, since $G$ is Spartan, by \Cref{wgoodVC}, every vertex in $I$ belongs to a minimum sized weakly good vertex cover. Consider a minimum sized vertex cover $T$ which contains some $v\in I$. Clearly, $T$ can contain only $|I|$ vertices from $C\cup I$. Also as $T$ is also a vertex cover of $H$ of size $|I|$ and $H$ has only two minimum vertex covers, $I\subset T$ and $C\cap T=\emptyset$. The graph $G[V(G)\setminus(I \cup C)])$ is non-empty as $I$ is not maximal. If $T\cap (V(G)\setminus(I \cup C))$ is more than $mvc(G[V(G)\setminus(I \cup C)])$ then a smaller vertex cover of $G$ than $T$ can be obtained from the union of $C$ and a minimum vertex cover of  $(V(G)\setminus(I \cup C))$. Thus $T\cap (V(G)\setminus(I \cup C))$ is equal to the $mvc(G[V(G)\setminus(I \cup C)])$ which makes $C$ a weakly bad subset for $T$ which is a contradiction.
\end{proof}
The main result of this section is the following:
\konigspartan*
\begin{proof}
As described in \Cref{connected}, it is sufficient to look at connected graphs i.e. it will be sufficient to prove that a  connected K\"onig graph $G$ is Spartan if and only if it is bipartite and elementary. The reverse direction has already been proved in \cite{AFI2015}. Thus we need to only show that if a connected K\''onig graph $G$ is Spartan, then it is bipartite and elementary. 

Since $G$ is Spartan, let $S$ be a minimum sized vertex cover of $G$ which is an eternal vertex cover configuration and let $I=V(G)\setminus S$ be the corresponding independent set. By \Cref{matchI}, there exists a matching $M$ from $I$ to $S$ which saturates $I$. This means that $|S|\geq |I|$. Since $S$ is a minimum sized vertex cover, each edge of a maximum matching of $G$ must have at least one endpoint in $S$. Suppose some edge of a maximum matching has two endpoints in $S$, then $|S|>mm(G)$, which contradicts the assumption that $G$ is a K\"onig graph. Therefore every edge of a maximum matching of $G$ has exactly one endpoint in $S$. Thus, $|S|\leq |V(G)\setminus S|$ i.e., $|S|\leq |I|$. Hence we obtain $|S|=|I|=\nicefrac{V(G)}{2}$ and $M$ is a perfect matching of $G$.
   
Now consider the bipartite graph $H$ such that $V(H)=V(G)$ and $E(H)=E(G)\setminus E(G[S])$ i.e. $H$ has the same vertex set as that of $G$ and edge set of $H$ consists of edges of $G$ going across from $S$ to $I$. Now $V(H)=S\cup I$ such that $S$ and $I$ are independent sets in $H$ and $M$ is a perfect matching of $H$. Suppose there exists $T\subsetneq I$ such that $|N(T)|=|T|$ in $H$. Since $I$ is an independent set in $G$ as well, we have $|N(T)|=|T|$ in $G$. Since $T\subsetneq I$, $T$ is an independent set of $G$ which is not maximal. Thus by \Cref{nbdI}, we get that $G$ is not Spartan which is a contradiction. Therefore $H$ must be elementary by \Cref{degone}. 

Thus we know that $H$ can have only two minimum sized vertex covers $S$ and $I$ (using an alternate definition of elementary bipartite graph from \cite{LLPM-matchingbook}). Since $E(H)\subseteq E(G)$, any vertex cover of $G$ must be a vertex cover of $H$. Thus $G$ cannot have any minimum sized vertex cover other than $S$ and $I$. If there is an edge of $G$ with both its endpoints in $S$, then $I$ is not a vertex cover of $G$ and thus $S$ is the only minimum sized vertex cover of $G$. Let $v\in I$, there does not exist a minimum sized vertex cover of $G$ which contains $v$. Thus using \Cref{vinVC}, we get that $G$ is not Spartan which is a contradiction. Thus there cannot be any edge with both the endpoints in $S$. Thus $E(G)=E(H)$ and we already know that $V(G)=V(H)$ and $H$ is bipartite and elementary. Therefore, $G$ is bipartite and elementary.  
\end{proof}

\section{General Graphs}
\label{sec:general}

Before stating the main result of this section, we will need the definition of an auxiliary graph defined for a pair of vertex covers of a graph $G$. Also, in this section, as we are dealing with Spartan-ness, we are dealing with configurations of guards of size $\mathsf{mvc}(G)$. If there is more than one guard on some vertex of a configuration, the set of vertices occupied by guards does not form a vertex cover and this configuration can be attacked by the attacker immediately. Hence we look at only those configurations with one guard per vertex.

\begin{definition}
Let $S$ and $T$ be two vertex covers of a graph $G$. Let $H_1, \ldots, H_k$ be the connected components of $G[S \cap T]$. For $i, 1 \leqslant i \leqslant k$, we say that two vertices $u \in S\setminus T$ and $v \in T\setminus S$ are \emph{pseudo-adjacent via $i$} if both $u$ and $v$ are adjacent to some vertex in $V(H_i)$.

We define two subsets of $D \times D$:

$$E_1 := \{(u,v) \mid u \in S \setminus T, v \in T \setminus S, uv \in E(G)\},$$

and 

$$E_2 := \{(u,v) \mid u \in S \setminus T, v \in T \setminus S, uv \text{ are pseudo-adjacent via }i \text{ for some } i\}.$$

We use $\mathfrak{h}_G(S,T)$ to denote the graph $(D,E_1 \cup E_2)$. 
\end{definition}

\begin{figure}
    \centering
    \begin{tikzpicture}[scale=0.77]
    \tikzset{decoration={snake,amplitude=.4mm,segment length=2mm, post length=0mm,pre length=0mm}}
       \draw[fill=DodgerBlue!70] (0,0) circle (0.3 cm); 
       \draw[fill=DodgerBlue!70] (0,1) circle (0.3 cm); 
       \draw[fill=DodgerBlue!70] (0,2) circle (0.3 cm); 
       \draw[fill=DodgerBlue!70] (0,3) circle (0.3 cm); 

      \draw[fill=CarnationPink!70] (2,0) circle (0.3 cm); 
      \draw[fill=CarnationPink!70] (2,1) circle (0.3 cm); 
      \draw[fill=CarnationPink!70] (2,2) circle (0.3 cm); 
      \draw[fill=CarnationPink!70] (2,3) circle (0.3 cm); 

      \node at (0,3.8){$S\setminus T$};
      \node at (2,3.8){$T\setminus S$};
      
     \draw [RoyalBlue,thick,rounded corners] (-0.5,-0.5) rectangle (0.5,3.5);
     \draw [WildStrawberry,thick,rounded corners] (1.5,-0.5) rectangle (2.5,3.5);

      \draw[fill=OrangeRed!70] (-1,-2) circle (0.7 cm); 
      \node at (-1,-2){$C_1$};
      \draw[fill=SeaGreen] (1,-3) circle (0.7 cm); 
      \node at (1,-3){$C_2$};
      \draw[fill=Purple!70] (3,-2) circle (0.7 cm); 
      \node at (3,-2){$C_3$};

       \draw[fill=DodgerBlue!70] (6,0) circle (0.3 cm); 
       \draw[fill=DodgerBlue!70] (6,1) circle (0.3 cm); 
       \draw[fill=DodgerBlue!70] (6,2) circle (0.3 cm); 
       \draw[fill=DodgerBlue!70] (6,3) circle (0.3 cm); 

      \draw[fill=CarnationPink!70] (8,0) circle (0.3 cm); 
      \draw[fill=CarnationPink!70] (8,1) circle (0.3 cm); 
      \draw[fill=CarnationPink!70] (8,2) circle (0.3 cm); 
      \draw[fill=CarnationPink!70] (8,3) circle (0.3 cm); 

      \draw [RoyalBlue,thick,rounded corners] (5.5,-0.5) rectangle (6.5,3.5);
      \draw [WildStrawberry,thick,rounded corners] (7.5,-0.5) rectangle (8.5,3.5);

      \node at (6,3.8){$S\setminus T$};
      \node at (8,3.8){$T\setminus S$};

%edgesinG
      \draw[thick](0.3,3)--(1.7,3);
      \draw[thick](0.3,2)--(1.7,2);
      \draw[thick](0.3,2)--(1.7,3);
      \draw[thick](0.3,1)--(1.7,2);
      \draw[thick](0.3,1)--(1.7,0);
      \draw[thick,dashed] (-0.3,1)--(-1.5,-1.5);
      \draw[thick,dashed] (0,-0.3)--(-1,-1.3);
      \draw[thick,dashed] (2,0.7)--(-0.6,-1.4);
      \draw[thick,dashed] (0.3,1)--(0.7,-2.3);
      \draw[thick,dashed] (1.7,1)--(1.3,-2.3);
      \draw[thick,dashed] (2,-0.3)--(1.5,-2.5);
      \draw[thick,dashed] (0.3,3)--(2.7,-1.3);
      \draw[thick,dashed] (2.3,3)--(3.2,-1.3);
%edgesinH
      \draw[thick](6.3,3)--(7.7,3);
      \draw[thick](6.3,2)--(7.7,2);
      \draw[thick](6.3,2)--(7.7,3);
      \draw[thick](6.3,1)--(7.7,2);
      \draw[thick](6.3,1)--(7.7,0);    

      \draw[thick,decorate, OrangeRed](6.3,1)--(7.7,1);
      \draw[thick,decorate, OrangeRed](6.3,0)--(7.7,1);
      \draw[thick,decorate, SeaGreen](6,1.3)--(8,1.3);
      \draw[thick,decorate, SeaGreen](6,0.7)--(7.7,0);
      \draw[thick,decorate, Purple](6,3.3)--(8,3.3);
    \end{tikzpicture}
    \caption{Depicting the construction of $\mathfrak{h}_G(S,T)$.}
    \label{fig:1}
\end{figure}

Note that $\mathfrak{h}_G(S,T)$ may have multiedges. In the context of this graph, we refer to the edges in $E_1$ as \emph{real} edges and the edges in $E_2$ as \emph{helper} edges. 

Note that the helper edges can be naturally partitioned into $k$ sets $E_{2}^{(1)}, \ldots, E_{2}^{(k)}$, where $E_{2}^{(i)}$ consists of the helper edges that are pseudo-adjacent via $i$. For convenience, we refer to the edges in $E_{2}^{(i)}$ as being edges of color $i$. 

The following can be easily observed:
\begin{lemma}\label{lem:bip}
    The graph $h_G(S,T)$ is bipartite for a graph $G$ with at least one edge.
\end{lemma}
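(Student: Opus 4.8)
The plan is to exhibit an explicit bipartition of $\mathfrak{h}_G(S,T)$ and check that every edge (real or helper) respects it. The natural candidate is the partition already built into the definition: take one side to be $S \setminus T$ and the other side to be $T \setminus S$. The vertex set $D$ of $\mathfrak{h}_G(S,T)$ is exactly $(S\setminus T) \cup (T \setminus S)$ (this is what "$D$" abbreviates in the definition, since every edge in $E_1 \cup E_2$ has one endpoint in each of these two sets), so these two sets partition $D$, and neither is empty once $G$ has at least one edge — this is where the hypothesis is used, and I would dispatch it with a short remark.

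First I would record that $S \setminus T$ and $T \setminus S$ are disjoint (immediate) and together cover all of $D$ (because both $E_1$ and $E_2$ are defined as subsets of $(S\setminus T)\times(T\setminus S)$, so no vertex of $D$ lies outside these two sets and, conversely, any isolated would-be vertex can simply be excluded — more cleanly, the definition says $\mathfrak{h}_G(S,T) = (D, E_1 \cup E_2)$ with $D = (S\setminus T)\cup(T\setminus S)$). Then for the edge check: every edge $(u,v) \in E_1$ has $u \in S\setminus T$ and $v \in T \setminus S$ by definition, so it goes between the two sides; every helper edge $(u,v) \in E_2$ likewise has $u \in S\setminus T$ and $v \in T\setminus S$ by definition of pseudo-adjacency. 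Hence no edge has both endpoints on the same side, which is precisely the statement that $\mathfrak{h}_G(S,T)$ is bipartite with parts $S\setminus T$ and $T\setminus S$.

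I would also note one subtlety worth a sentence: $\mathfrak{h}_G(S,T)$ may carry multiedges (a pair $u,v$ can be simultaneously really adjacent and pseudo-adjacent via several colors), but multiedges do not affect bipartiteness — a multigraph is bipartite precisely when its underlying simple graph is, and the bipartition above works verbatim regardless of edge multiplicities. The only role of the hypothesis "$G$ has at least one edge" is to guarantee $D \neq \emptyset$ so the claim is non-vacuous in the intended sense; strictly, even the empty graph is trivially bipartite, so the hypothesis is a convenience rather than a necessity.

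There is essentially no obstacle here: the lemma is true by unwinding the definition of $\mathfrak{h}_G(S,T)$, and the "proof" is a one-line verification that both edge classes $E_1$ and $E_2$ were defined to sit inside $(S\setminus T) \times (T\setminus S)$. The only thing to be careful about is not to confuse this auxiliary bipartition (between leftover parts of two vertex covers) with any bipartiteness of $G$ itself — $G$ need not be bipartite at all, and indeed the whole point of the later development is to handle non-bipartite (even non-König) graphs. I would therefore keep the proof to two or three sentences and move on.
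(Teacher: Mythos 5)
Your proof is correct and takes essentially the same (one-line) route as the paper: both simply unwind the definition that every edge of $E_1 \cup E_2$ has one endpoint in $S \setminus T$ and the other in $T \setminus S$, so that these two sets form a valid bipartition, multiedges notwithstanding. The only slip is your parenthetical claim that neither side is empty once $G$ has an edge --- if $S = T$ both sides are empty --- but as you yourself note, bipartiteness then holds vacuously, so nothing is affected.
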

\begin{proof}
    Since $S\setminus T$ and $T\setminus S$ are subsets of independent sets $V(G)\setminus T$ and $T\setminus S$ respectively, the graph $h_G(S,T)$ is bipartite.
\end{proof}

\begin{restatable}{theorem}{spartan}
    A graph $G$ is Spartan if and only if there exists a non-empty family $\mathcal{F}$ of minimum-sized vertex covers such that the following conditions hold:

    For every $S \in \mathcal{F}$ and for every edge $uv \in G$ for which $u \in S$ and $v \notin S$, there exists $T \in \mathcal{F}$ such that and $v \in T$ such that:
        \begin{enumerate}[(1)]
        \item either $u \notin T$ and there is perfect matching in $\mathfrak{h}_G(S,T)$ which contains the edge $uv$, 
        \item or $u \in T$ and there is a perfect matching in $\mathfrak{h}_G(S,T)$ where the matched partner of $v$, say $w$, has a neighbor --- in $G$ --- among the vertices of $X$, where $X$ is the connected component of $G[S \cap T]$ that contains $u$. 
        \end{enumerate}  
\end{restatable}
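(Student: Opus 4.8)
The plan is to establish the two directions separately, in each case reducing the game-theoretic condition ``$G$ is Spartan'' to the combinatorial condition about the family $\mathcal{F}$ via the intermediate notion of mutual reachability between minimum vertex covers. For the forward direction, suppose $G$ is Spartan with $\mathsf{evc}(G) = \mathsf{mvc}(G) = k$. I would fix an optimal defender strategy and let $\mathcal{F}$ be the collection of all $k$-sized vertex covers that arise as guard configurations reachable during play against an adversary who plays ``worst-case'' attacks; since the defender survives forever with $k$ guards, every configuration in $\mathcal{F}$ is a vertex cover, and for every $S \in \mathcal{F}$ and every edge $uv$ with $u \in S$, $v \notin S$, the attacker may attack $uv$, so the defender's response yields some $T \in \mathcal{F}$ with $v \in T$, reachable from $S$ via single-step guard movements that include moving a guard along $uv$. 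The work is then to show that such a legal guard movement translates precisely into a perfect matching of $\mathfrak{h}_G(S,T)$ with the stated property: real edges of $\mathfrak{h}_G(S,T)$ encode a guard stepping directly from $S \setminus T$ to $T \setminus S$, while the connected components $H_i$ of $G[S \cap T]$ behave as ``reservoirs'' through which guards can be shuffled — a path through $H_i$ corresponds to a helper edge of color $i$. Here I would invoke the disjoint-paths characterization of mutual reachability described in the introduction, together with \Cref{lem:PM} to know such movements always respect the structure, and convert the vertex-disjoint paths into a matching by contracting each path to its two endpoints in $S \setminus T$ and $T \setminus S$ and recording which component $H_i$ it traverses.

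For the reverse direction, suppose such a family $\mathcal{F}$ exists. I would exhibit the defender strategy: start with guards on any $S \in \mathcal{F}$; when an edge $uv$ is attacked, if both endpoints are in the current cover exchange guards along $uv$ and stay put (the configuration remains a vertex cover, indeed the same one); otherwise $u \in S$, $v \notin S$, and the hypothesis provides $T \in \mathcal{F}$ and a perfect matching $M$ of $\mathfrak{h}_G(S,T)$ of type (1) or (2). In case (1), the matching $M$ is used to route guards: the real edge $uv \in M$ sends the guard on $u$ to $v$; each other real edge of $M$ sends a guard one step; each helper edge of color $i$ in $M$, say matching $u' \in S\setminus T$ to $v' \in T \setminus S$, is realized by sending the guard on $u'$ into the component $H_i$ and pulling a guard out of $H_i$ to $v'$ — and here the key point is that the perfect matching guarantees exactly as many helper edges of color $i$ as can be simultaneously realized through $H_i$, because each vertex of $H_i$ is used by at most one such ``transit,'' which I would verify by a counting/Hall-type argument on $H_i$. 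Case (2) is the subtler one: $u$ stays in the cover, $v$ enters, and the guard matched to $v$ in $M$ (the vertex $w$) must vacate its position, but since $w$ has a neighbor in the component $X$ of $G[S\cap T]$ containing $u$, a guard can flow from $X$ to cover $w$'s responsibilities while the guard on $u$ is freed to move toward $X$; I would check that the resulting configuration is exactly $T$ and is a legal one-step movement. In both cases the new configuration lies in $\mathcal{F}$, so the argument repeats and the defender survives forever, giving $\mathsf{evc}(G) \leq k = \mathsf{mvc}(G)$, hence Spartan.

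\textbf{The main obstacle.}
The hard part will be the faithful correspondence between legal guard movements (sets of vertex-disjoint single-step paths, as in the ``mutually reachable'' definition) and perfect matchings of $\mathfrak{h}_G(S,T)$ — in particular showing that realizability of a matching is not just necessary but \emph{sufficient}. A matching can freely use several helper edges of the same color $i$, and one must argue these can be simultaneously implemented inside $H_i$; the danger is that the paths through $H_i$ demanded by different helper edges might conflict on intermediate vertices. The resolution should be that $G[S\cap T]$ is covered by $k - |S\setminus T|$ guards sitting on $S \cap T$, so inside each $H_i$ there is exactly one guard per vertex, and routing one unit of ``flow'' across $H_i$ per incident helper edge is possible precisely because $H_i$ is connected and its guards can cascade one step each along a path; making this rigorous — and handling the interaction of type-(1) and type-(2) responses and the dead zone $V \setminus (S \cup T)$, which must never be entered — is where the real technical care is needed. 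I would isolate this as a standalone lemma: a set of real and helper edges forming a perfect matching of $\mathfrak{h}_G(S,T)$ is realizable by a legal guard movement from $S$ to $T$ if and only if it is a perfect matching, proved by translating to the disjoint-paths condition and back.
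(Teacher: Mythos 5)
Your forward direction follows the paper's argument essentially verbatim (trace each guard's one-step move, concatenate into vertex-disjoint paths from $S\setminus T$ to $T\setminus S$ through single components of $G[S\cap T]$, and contract each path to a real or helper edge), and that part is fine. The gap is in the reverse direction, specifically in the standalone lemma you propose to isolate: it is false that every perfect matching of $\mathfrak{h}_G(S,T)$ is realizable by a legal one-step guard movement. A matching may contain two helper edges of the same color $i$, and realizing each of them requires a path whose interior lies in the component $H_i$; these paths must be vertex-disjoint, and connectivity of $H_i$ does not guarantee this. Concretely, if $H_i$ is a single vertex $x$ and $u_1,u_2\in S\setminus T$, $v_1,v_2\in T\setminus S$ are all adjacent to $x$, then both helper edges $u_1v_1$ and $u_2v_2$ exist, but they cannot be simultaneously implemented: the lone guard on $x$ can exit to only one of $v_1,v_2$, and a guard arriving at $x$ from $u_1$ or $u_2$ cannot move again, since each guard moves at most once. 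Your ``one unit of flow per incident helper edge because $H_i$ is connected'' resolution therefore does not go through, and no counting/Hall argument on $H_i$ alone can rescue it, because the hypothesis of the theorem only hands you \emph{some} perfect matching containing the required edge --- the remaining edges may well be a non-realizable collection of same-colored helper edges even when a realizable matching exists.

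What is actually needed --- and what the paper supplies --- is a repair step: starting from the all-real-edge perfect matching $M_p$ between $S\setminus T$ and $T\setminus S$ guaranteed by \Cref{lem:PM}, one chooses, among all perfect matchings satisfying the required condition (containing $uv$ in case (1), or containing a color-$i$ helper edge incident to $v$ in case (2)), a matching $M$ maximizing $|M\cap M_p|$. The symmetric difference argument on $M\cup M_p$ (a union of shared edges and at most one even alternating cycle) then shows, via explicit exchanges along the cycle that preserve the required edge, that $M$ uses at most one helper edge of each color; only such a matching is converted back into vertex-disjoint paths and hence into a guard movement. This exchange argument, which also quietly handles case (2) by allowing the partner of $v$ to change from $w$ to another vertex that is still adjacent to $X$, is the crux of the sufficiency proof and is absent from your proposal. (A smaller point: in case (2) the guard on $u$ moves to $v$ and is replenished by a cascade \emph{within} $X$ fed by the guard leaving $w$; your description of $u$'s guard moving ``toward $X$'' reverses this flow, but that is cosmetic compared to the missing repair lemma.)
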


\begin{proof} First, assume that $G$ is Spartan. Then there is a strategy for indefinite defense of $G$ with $\mathsf{mvc}(G)$ guards. Let $\mathcal{F}$ be the set of all vertex covers that are used in the strategy: note that they are all of minimum size by definition. Since $G$ is Spartan, $\mathcal{F}$ is indeed non-empty.
    
%Clearly, Condition (1) holds: indeed, if for some vertex $v \in V(G)$, there is no vertex cover in $\mathcal{F}$ that contains $v$, then the attacker can attack any edge adjacent to $v$ to win, which is a contradiction. 

%Now, suppose Condition (2) is violated. This implies that there is some choice of $S \in \mathcal{F}$ and $uv \in E(G)$ with $u \in S$ and $v \notin S$ such that for every $T \in \mathcal{F}$ containing $v$, either $u \notin T$ and the $\mathfrak{h}_G(S,T)$ has no perfect matching containing $uv$ or $u \in T$ and the $\mathfrak{h}_G(S,T)$ violates 2(b).

Consider a vertex cover $S\in \mathcal{F}$ and an edge $uv$ such that $u \in S$ and $v \notin S$.

As $S$ occurs in the strategy of the defender, there is a way to defend an attack on an edge when guards occupy the vertices of $S$ (one guard on each vertex). We attack the edge $uv$ here and observe the promised defense: let us say that the guards are positioned on the vertex cover $T$ after the defense is executed. Clearly, $T \in \mathcal{F}$, by definition. Depending on whether $u \in T$ or not, we can conclude that either (1) holds or (2) does (respectively), by tracing the movement of the guards from vertices in $S \setminus T$ to vertices in $T \setminus S$. 

Suppose $u\notin T$, we show that (1) holds. Let $S'=S\setminus T$, i.e, the set of vertices which had a guard before the attack and do not have a guard after the defense is completed. Let $T'=T\setminus S$, i.e, the set of vertices which had a guard after the defense is completed and do not have a guard before the attack. Let $|S'|=|T'|=k$. Then there must be $k$ vertex disjoint paths $\mathcal{P}$ from $S'$ to $T'$ (with the starting vertex of each path from $S'$, end vertex in $T'$ and each intermediate vertex from $S\cap T$) and one of these paths must be the edge $uv$. We show how the collection $\mathcal{P}$ is obtained. Each path is obtained by tracing the movement of each individual guard. The guard on $u$ is forced to move to $v$ by the attacker and hence the edge $uv$ is traced by a guard. Thus $uv\in \mathcal{P}$. Similarly look at the movement of a guard on a vertex say $u_1\neq u$ of $S'$. This guard moves to some vertex $u_2$. If $u_2\in T$, $u_1u_2\in \mathcal{P}$. Otherwise $u_2\in S\cap T$ as the guard can only move for one step after each attack and hence $u_2$ must have a guard both before and after the attack. The guard which previously was on $u_2$ moves to some $u_3$ (as both $S$ and $T$ are vertex covers hence there must be only one guard per vertex after the reconfiguration has been done). Now again if $u_3\in T$, $u_1u_2u_3\in \mathcal{P}$. Otherwise $u_3\in S\cap T$ and the guard which was previously on $u_3$ must move to some $u_4$. This process will only stop when a guard moves to a vertex which did not already have a guard, i.e., a vertex in $T'$. We obtain $k$ paths by tracing the movement of each of the $k$ guards in $S'$ (including the guard moving from $u$ to $v$). It is clear that a vertex $v$ in $S'$ or $T'$ cannot belong to two paths because this will indicate that two guards started from the vertex $v$ (if $v\in S'$) or two guards ended up at $v$ (if $v\in T'$). A vertex $v\in S\cap T$ cannot belong to two paths in $\mathcal{P}$ because this will indicate that this vertex has at least two guards after the reconfiguration is done which is not possible. Also, each path in $\mathcal{P}$ which contains more than one edge must have all the intermediate vertices from the same connected component of $S\cap T$. This is because the intermediate vertices contain a guard both before and after the reconfiguration and hence lie in $S\cap T$. Each guard can move only to a neighboring vertex, the intermediate vertices of a path in $\mathcal{P}$ also form a path in $S\cap T$. Thus a path $u_1u_2\ldots u_t\in \mathcal{P}$ where $t>2$ implies $u_1\in S'$, $u_t\in T'$ and $u_2,u_3,\ldots,u_{t-1}\in V(H_i)$ for some $i$. Therefore, in the graph $\mathfrak{h}_G(S,T)$, there exists a helper edge of color $i$ between $u_1$ and $u_t$. Also an edge in $\mathcal{P}$ corresponds to a real edge in  $\mathfrak{h}_G(S,T)$. Thus each path in $\mathcal{P}$ gives an edge in $\mathfrak{h}_G(S,T)$. The edges in $\mathfrak{h}_G(S,T)$ corresponding to the paths in $\mathcal{P}$ form a perfect matching because each vertex in $S'$ is adjacent to exactly one edge corresponding to a path in $\mathcal{P}$ because there is only one guard on each vertex of $S'$ before the attack and each vertex in $S'$ has no guard after the attack. Thus $\mathfrak{h}_G(S,T)$ contains a perfect matching containing the edge $uv$ and condition 2(a) holds.

Now suppose $u\in T$, we show that (2) holds. Let $S'$ and $T'$ be the same as above and similarly $k=|S'|=|T'|$. By the same reasoning as the previous case, we have $k$ vertex disjoint paths in $G$ which correspond to a perfect matching $M$ in $\mathfrak{h}_G(S,T)$. Now as $u\in S$ and $u\in T$, a guard is present on $u$ before the attack and after the reconfiguration in complete. Since $T$ is obtained from $S$ by applying the winning strategy for a defense after the attacker attacks the edge $uv$, the guard on $u$ must move to $v$ while reconfiguring from $S$ to $T$. Since a guard is present on $u$ after the attack, there must exist a path $u_1u_2\ldots u_t$ in $\mathcal{P}$ where $t>2$ such that $u_1\in S'$, $u_{t-1}=u$ and $u_t=v$. Also, $u_2,u_3,\ldots,u_{t-1}$ belong to the same connected component of $H_i$. This path corresponds to the movement: A guard on a vertex $u_1$ of $S'$ moves to a vertex $u_2$ of $S\cap T$. The guard on $u_2$ moves to $u_3$ of $S\cap T$ and so on till the guard on $u_{t-2}$ moves to $u$ and the guard on $u$ moves to $v$. Clearly the vertices $u_2,u_3,\ldots,u_{t-1}$ must belong to the same connected component of $S\cap T$ because a guard can only move to a neighboring vertex. The edge $u_1v$ is a helper edge in $ \mathfrak{h}_G(S,T)$ and lies in $M$. The vertex $u_1$ has a neighbour $u_2$ in the connected component of $G[S\cap T]$ that contains $u$ (because there is a path $u_2u_3\ldots u_{t-1}=u$ in $G[S\cap T]$. Hence (2) holds.

Thus we have shown that when $G$ is Spartan, the family $\mathcal{F}$ of all vertex covers used in a strategy of the defender; satisfies both Conditions (1) and (2).

Now we show the converse, i.e., if a graph $G$ has a family of minimum sized vertex covers $\mathcal{F}$ which satisfy both (1) and (2), then $G$ is Spartan. For this we show that if the guards occupy a vertex cover $S\in \mathcal{F}$ and an arbitrary edge $uv$ is attacked, the guards can reconfigure themselves such that at least one guard moves across the attacked edge $uv$ and the final positions of the guards form a vertex cover $T\in \mathcal{F}$.

Since $S$ is a vertex cover, there cannot be any edge with both the endpoints in $S$. If an edge $uv$ such that $u,v\in S$ is attacked, the guard on $u$ moves to $v$ and the guard on $v$ moves to $u$. The attack is defended and the configuration $S$ is restored. Therefore we need to consider the attack on an edge $uv$ such that $u\in S$ and $v\notin S$.

By the given condition, there exists a vertex cover $T \in \mathcal{F}$ such that and $v \in T$ such that:
        \begin{enumerate}[(1)]
        \item either $u \notin T$ and there is perfect matching in $\mathfrak{h}_G(S,T)$ which contains the edge $uv$, 
        \item or $u \in T$ and there is a perfect matching in $\mathfrak{h}_G(S,T)$ where the matched partner of $v$, say $w$, has a neighbor --- in $G$ --- among the vertices of $X$, where $X$ is the connected component of $G[S \cap T]$ that contains $u$. 
        \end{enumerate}

Suppose (1) holds. We will show that it is possible to reconfigure the guards from $S$ to $T$ such that one guard moves across $uv$. Let $S'=S\setminus T$, $T'=T\setminus S$ and $|S'|=|T'|=k$. It is enough to show that there is a collection $\mathcal{P}$ containing $uv$ of $k$ vertex disjoint paths from $S'$ to $T'$ with all the intermediate vertices in $S\cap T$. A path $u_1u_2\ldots u_t$ in $G$ where $u_1\in S'$, $u_t\in T'$ and $u_2,u_3,\ldots,u_{t-1}\in S\cap T$ will represent the movement of a guard from $u_1$ to $u_2$, the movement of the guard previously on $u_2$ to $u_3$, and so on up to the movement of the guard previously on $u_{t-1}$ to $u_t$. 

If there exists a perfect matching $M$ in $\mathfrak{h}_G(S,T)$ such that it consists of possibly some real edges and at most one helper edge of each color, then there exists a collection $\mathcal{P}$ of $k$ vertex disjoint paths from $S'$ to $T'$ with intermediate vertices in each path from $S\cap T$. We show that a path in $\mathcal{P}$ can be obtained from each edge of $M$. A real edge in $\mathfrak{h}_G(S,T)$ is also an edge in $G$. Thus for each real edge $e$ in $M$, add $e$ to $\mathcal{P}$. A helper edge $uv$ of color $i$ implies the existence of a path  $u_1=uu_2u_3\ldots u_t=v$ in $G$ where $t>1$ and $u_2,u_3,\ldots,u_{t-1}\in V(H_i)$ where $H_i$ is a connected component of $S\cap T$. For each helper edge $e=wz$ in $M$ of color $i$, add a path $u_1=wu_2u_3\ldots u_t=z$ in $G$ where $t>1$ and $u_2,u_3,\ldots,u_{t-1}\in V(H_i)$ to $\mathcal{P}$. Clearly, there are $k$ paths in $\mathcal{P}$ (one path obtained from each edge of $M$). We show that the paths in $\mathcal{P}$ are vertex disjoint. Since $M$ is a matching, the endpoints of each edge in $M$ are distinct. Hence the endpoints of all the $k$ paths are distinct. Since all the intermediate vertices of each path are distinct and each path of length at least $2$ is obtained from a helper edge of different color (as there are no two helper edges of the same color in $M$), each path obtained from a helper edge in $M$ has intermediate vertices from distinct components of $G[S\cap T]$. Thus all the intermediate vertices of each path in $\mathcal{P}$ are disjoint.

Now we show that there exists a perfect matching in $\mathfrak{h}_G[S,T]$ which contains possibly some real edges and at most one helper edge of each color. By \Cref{lem:PM}, there exists a perfect matching from $S'$ to $T'$ in $G$. This means that there exits a perfect matching $M_p$ in $\mathfrak{h}_G(S,T)$ which consists of only real edges. Let $M$ be the perfect matching in $\mathfrak{h}_G(S,T)$ such that $M$ contains $uv$ (exists by condition 2$(a)$) and the size of $M\cap M_{p}$ is as large as possible. Now consider the graph $H$ in $\mathfrak{h}_G(S,T)$ such that $V(H)=V(\mathfrak{h}_G(S,T))$ and $E(H)=M\cup M_{p}$. Since $M$ and $M_p$ are both perfect matchings, $H$ will be a union of edges and even cycles.

\begin{claim}
   There can be at most one cycle in $H$.
\end{claim}

\begin{proof}
    If there are two distinct cycles $C_1$ and $C_2$ in $M$, at most one of them can contain the edge $uv$. Therefore, without loss of generality we can assume that $uv\notin E(C_1)$. Suppose $C_1$ is a cycle and $C_1=u_1v_1u_2v_2\ldots u_tv_tu_1$ where $M_1:=\{u_1v_1,u_2v_2,\ldots,u_tv_t\}\subset M$ and $M_2:=\{v_1u_2,v_2u_3,\ldots,v_{t-1}u_t,v_tu_1\}\subset M_p$ with $t>1$. The matching $M':=M\setminus M_1\cup M_2$ will have strictly more intersection with $M_p$ than $M$ and will also contain the edge $uv$ which is a contradiction.
\end{proof}
If $H$ has no cycle then $M=M_p$ which means that all the edges of $M$ are real and we are done. Now we consider the case where $M$ has exactly one even cycle $C$. If $uv\notin E(C)$, then we get a contradiction by the same argument as the claim above. Therefore let $C=u_1(=u)v_1(=v)u_2v_2\ldots u_tv_tu_1$. Here $M_1:=\{u_1v_1,u_2v_2,\ldots,u_tv_t\}\subset M$ and $M_2:=\{v_1u_2,v_2u_3,\ldots,v_{t-1}u_t,v_tu_1\}\subset M_p$. Also, $\{u_1, u_2,\ldots,u_t\}\subset S'$ and $\{v_1, v_2,\ldots,v_t\}\subset T'$. This is because $u\in S'$ and $\mathfrak{h}_G(S,T)$ is bipartite by \Cref{lem:bip}.  All the edges in $M$ which are not in $M_1$ are also the edges of $M_p$ and hence are real. Next we show that $M$ cannot contain two (or more) helper edges of the same color.
\begin{claim}
    $M$ can contain at most one helper edge of each color.
\end{claim}
\begin{proof}
    Suppose $M$ contains two helper edges of the same color (say $i$). These two edges must lie in $M_1$ because the edges of $M$ outside $M_1$ are real. Let these two edges be $u_pv_p$ and $u_qv_q$ where $1<p<q$. (As $uv$ is a real edge, we have $p\neq 1$.) Therefore there must exist edge $u_pv_q$ of color $i$. This is because $u_p$ and $v_p$ are both adjacent to some vertex in $V(H_i)$ and similarly $u_q$ and $v_q$ are both adjacent to some vertex in $V(H_i)$. Hence $u_p$ and $v_q$ are adjacent to some vertex in $V(H_i)$. Also, $M_3:=\{v_pu_{p+1},v_{p+1}u_{p+2},\ldots,v_{q-1}u_q\}\subset M_p$ hence the edges in $M_3$ are real edges of $\mathfrak{h}_G(S,T)$. Let $M_4=\{u_pv_p,u_{p+1}v_{p+1},\ldots,u_qv_q\}$ and $M_5=(M_1\setminus M_4)\cup M_3\cup \{u_pv_q\}$. Consider the perfect matching $M'=(M\setminus M_1)\cup M_5$. It can be checked that $M'$ is a perfect matching which contains $uv$ and has greater intersection with $M_p$ than $M$ which is a contradiction.
\end{proof}

\begin{figure}
    \centering
    \begin{tikzpicture}[scale=0.77]
    \tikzset{decoration={snake,amplitude=.4mm,segment length=2mm, post length=0mm,pre length=0mm}}
    \draw[thick] (6.25, 3.25)--(6.75, 3.75);
\draw[thick] (6.8, 5.65)--(6.25, 6.25);
\draw[thick] (4.3, 6.25)--(3.75, 5.65);
\draw[thick] (3.75, 3.75)--(4.25, 3.25);
\draw[thick,decorate,Fuchsia] (4.8,3)--(5.7,3);
\draw[thick,red] (7,4.3)--(7,5.1);
\draw[thick,decorate,SeaGreen] (5.7,6.4)--(4.8,6.4);
\draw[thick,decorate,Fuchsia] (3.5,4.3)--(3.5,5.1);

% \draw[thick] (10.3, 6.25)--(9.75, 5.65);
\draw[thick] (9.75, 3.75)--(10.25, 3.25);
\draw[thick,red] (13,4.3)--(13,5.1);
\draw[thick,decorate,SeaGreen] (11.7,6.4)--(10.8,6.4);
\draw[thick,decorate,Fuchsia] (11.7,3.2)--(9.6,5.25);

\draw[fill=YellowGreen] (6, 3) circle (0.3cm);
\draw[fill=DodgerBlue] (7, 4) circle (0.3cm);
\draw[fill=YellowGreen] (7, 5.4) circle (0.3cm);
\draw[fill=DodgerBlue] (6, 6.4) circle (0.3cm);
\draw[fill=YellowGreen] (4.5, 6.4) circle (0.3cm);
\draw[fill=DodgerBlue] (3.5, 5.4) circle (0.3cm);
\draw[fill=YellowGreen] (3.5, 4) circle (0.3cm);
\draw[fill=DodgerBlue] (4.5, 3) circle (0.3cm);
\draw[fill=YellowGreen] (12, 3) circle (0.3cm);
\draw[fill=DodgerBlue] (13, 4) circle (0.3cm);
\draw[fill=YellowGreen] (13, 5.4) circle (0.3cm);
\draw[fill=DodgerBlue] (12, 6.4) circle (0.3cm);
\draw[fill=YellowGreen] (10.5, 6.4) circle (0.3cm);
\draw[fill=DodgerBlue] (9.5, 5.4) circle (0.3cm);
\draw[fill=YellowGreen] (9.5, 4) circle (0.3cm);
\draw[fill=DodgerBlue] (10.5, 3) circle (0.3cm);

\node at (6,3){$a_1$};
\node at (7,4){$b_1$};
\node at (7,5.4){$a_2$};
\node at (6,6.4){$b_2$};
\node at (4.5,6.4){$a_3$};
\node at (3.5,5.4){$b_3$};
\node at (3.5,4){$a_4$};
\node at (4.5,3){$b_4$};
\node at (12,3){$a_1$};
\node at (13,4){$b_1$};
\node at (13,5.4){$a_2$};
\node at (12,6.4){$b_2$};
\node at (10.5,6.4){$a_3$};
\node at (9.5,5.4){$b_3$};
\node at (9.5,4){$a_4$};
\node at (10.5,3){$b_4$};
    
    \end{tikzpicture}
    \caption{Demonstrating the exchange argument from the proof.}
    \label{fig:exchange}
\end{figure}
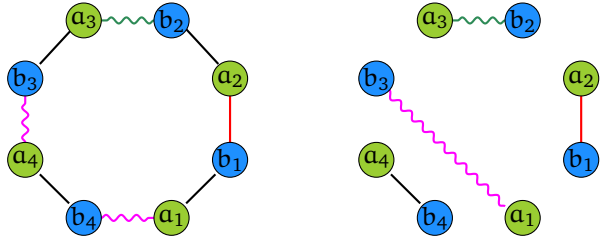

    Thus we have shown the existence of a perfect matching between $S'$ and $T'$ containing the edge $uv$ in $\mathfrak{h}_G(S,T)$ which contains at most one helper edge of each color. This implies that there is a collection of $|S'|$ many vertex disjoint paths from $S'$ to $T'$ such that one path in this collection is the edge $uv$. Therefore, the guards can reconfigure from $S$ to $T$ (such that one guard moves across $uv$).

    Now suppose (2) holds. Since $w$ has a neighbor in the same connected component $X=C_i$ (say) of $G[S\cap T]$ that contains $u$, this means that there exists a path $w=u_1u_2\ldots u_{t-1}=u u_t=v$ where $t>1$ such that $u_2,u_3,\ldots, u_{t-1}(=u) $ belong to $V(C_i)$ that is, $wv$ is a helper edge of color $i$ in $\mathfrak{h}_G(S,T)$. As we have seen above, a matching in $\mathfrak{h}_G(S,T)$ with at most one helper edge of each color can be used to show the existence of $S\setminus T$ many vertex disjoint paths from $S'=S\setminus T$ to $T'=T\setminus S$ with the intermediate vertices from $S\cap T$. This shows that it is possible to reconfigure the guards from $S$ to $T$.

    Now if we have a perfect matching in $\mathfrak{h}_G(S,T)$ containing $wv$ with at most one helper edge of each color, it can be seen that it is possible to reconfigure the guards from $S$ to $T$ such that at least one guard moves across $uv$. This is because we have already seen that such a matching represents vertex-disjoint paths in $G$ and particularly, the edge $wv$ represents the path $(w=)u_1u_2\ldots u_{t-1}(=u) u_t(=v)$. This means that it is possible to move the guard on $w$ to $u_2$, the guard previously on $u_2$ to $u_3$ and so on up to the guard previously on $u_{t-2}$ to $u$ and the guard previously on $u$ to $v$. Thus, the existence of a perfect matching in $\mathfrak{h}_G(S,T)$ containing $wv$ with at most one helper edge of each color is enough to show that the defender can defend an attack on the edge $uv$ and reconfigure the guards to a vertex cover $T$ while starting from a vertex cover $S$. It remains to show the existence of such a perfect matching in $\mathfrak{h}_G(S,T)$.

    By \Cref{lem:PM}, we know that there exists a perfect matching $M_p$ between $S'$ and $T'$ in $\mathfrak{h}_G(S,T)$ which consists of only real edges. Let $M$ be a perfect matching which contains a helper edge of color $i$ adjacent to $v$ (where $u\in V(C_i)$ for the component $C_i$ of $G[S\cap T]$) such that $M\cap M_p$ is as large as possible. It can be seen by a similar argument to the case (1) that the graph $H$ with $V(H)=S'\cup T' $ and $E(H)=M\cup M_p$ can have at most one cycle (even length). 

    If $H$ has no cycle, then $M=M_p$ which is not possible because $M$ has only real edges and $M_p$ has at least one helper edge of color $i$. Let the cycle in $H$ be $C=u_1v_1u_2v_2\ldots u_tv_tu_1$ where $M_1:=\{u_1v_1,u_2v_2,\ldots,u_tv_t\}\subset M$ and $M_2:=\{v_1u_2,v_2u_3,\ldots,v_tu_1\}\subset M_p$. Again, if $wv\notin M_1$, then $M'=M\setminus M_1 \cup M_2$ is a perfect matching containing a helper edge $wv$ (which is of color $i$ and adjacent to $v$) and $M'\cap M_p$ has greater size than $M\cap M_p$ which is a contradiction. Therefore, $wv\in M_1$. Also, if $M_1$ contains any two helper edges of the same color (other than $wv$), then we can use an exchange argument similar to the case (1) shown above to get a contradiction.

    It remains to show that $C$ cannot have an edge of color $i$ other than $wv$. Without loss of generality, let $w=u_1$ and $v=v_1$. Let $u_pv_p$ for some $p>1$ be the other helper edge of color $i$. Also, recall that similarly to case 2($a$), $\{u_1, u_2,\ldots,u_t\}\subset S'$ and $\{v_1, v_2,\ldots,v_t\}\subset T'$. This is because $u\in S'$ and $\mathfrak{h}_G(S,T)$ is bipartite by \Cref{lem:bip}. This means that $u_p\in S'$ and we already know that $v\in T'$. Since $u_pv_p$ is an edge of color $i$, $u_p$ is adjacent to $V(C_i)$ in $G$. We already know that $v$ is adjacent to $V(C_i)$ in $G$. Therefore, there exists a helper edge $u_pv$ of color $i$ in $\mathfrak{h}_G(S,T)$. Let $M_3:=\{v_2u_1, v_3u_2,\ldots, v_p u_{p-1},u_pv\}$, all these edges exist in $\mathfrak{h}_G(S,T)$ because the edges in $M_3$ other than $u_pv$ are from $M_p$ (as seen in the above paragraph) and hence are, in fact, real edges and we have already seen the existence of the edge $u_pv$ which is a helper edge of color $i$. Let $M_4:=\{u_1v_1,u_2v_2,\ldots,u_pv_p\}\subset M_1$. Let $M_5:=(M_1\setminus M_4)\cup M_3$. The perfect matching $M'=(M\setminus M_1)\cup M_5$ contains a helper edge $u_pv$ of color $i$ (adjacent to $v$) and has a greater intersection (in size) with $M_p$ compared to $M$, which is a contradiction. Therefore, the guards on $S$ can be reconfigured to $T$ while defending the attack on the edge $uv$. Thus we have shown that it is always possible for the guards to reconfigure between the vertex covers in $\mathcal{F}$ while defending every attack, the graph $G$ is Spartan.
\end{proof}
\section{Concluding Remarks} \label{sec:conclusions}
In this paper, we give a necessary and sufficient condition for a graph $G$ to be Spartan, i.e., to satisfy $\mathsf{evc}(G)=\mathsf{mvc}(G)$. However, there are several directions to be pursued further. An important question is whether the complexity of checking whether a given graph $G$ is Spartan is less than that of computing $\mathsf{evc}(G)$. In terms of checking whether $\mathsf{evc}(G)=\mathsf{mvc}(G)$, although we have a complete characterization, the question of finding a simpler characterization remains open. In particular, it would be interesting to know whether there exists a graph $G$ such that every vertex of $G$ belongs to a strongly (weakly) good vertex cover, but $\mathsf{evc}(G)\neq \mathsf{mvc}(G)$. If not, it would be good to know the proof that this condition is indeed sufficient to guarantee $\mathsf{evc}(G)=\mathsf{mvc}(G)$. Also, we know that there exist weakly good vertex covers that are not strongly good and hence can be destroyed by the attacker (hence, not $\mathsf{evc}$ configurations). However, we still do not know whether the condition: ``For each vertex $v$ there exists a weakly good minimum vertex cover containing $v$'' implies the following condition: ``For each vertex $v$ there exists a strongly good minimum vertex cover containing $v$''.

\bibliography{refs}

\begin{thebibliography}{1}

\bibitem{AFI2015}
Hisashi Araki, Toshihiro Fujito, and Shota Inoue.
\newblock On the eternal vertex cover numbers of generalized trees.
\newblock {\em {IEICE} Trans. Fundam. Electron. Commun. Comput. Sci.},
  98-A(6):1153--1160, 2015.

\bibitem{BCFPRW2021}
Jasine Babu, L.~Sunil Chandran, Mathew~C. Francis, Veena Prabhakaran, Deepak
  Rajendraprasad, and Nandini~J Warrier.
\newblock On graphs whose eternal vertex cover number and vertex cover number
  coincide.
\newblock {\em Discrete Applied Mathematics (in press)}, 2021.

\bibitem{graphsbook}
Reinhard Diestel.
\newblock {\em Graph theory}.
\newblock Springer, 2017.

\bibitem{KM09}
William~F. Klostermeyer and Christina~M. Mynhardt.
\newblock Edge protection in graphs.
\newblock {\em Australas. {J} Comb.}, 45:235--250, 2009.

\bibitem{LLPM-matchingbook}
L{\'a}szl{\'o} Lov{\'a}sz and Michael~D Plummer.
\newblock {\em Matching theory}, volume 367.
\newblock American Mathematical Soc., 2009.

\bibitem{MN23}
Neeldhara Misra and Saraswati~Girish Nanoti.
\newblock Spartan bipartite graphs are essentially elementary.
\newblock In {\em {MFCS}}, volume 272 of {\em LIPIcs}, pages 68:1--68:15.
  Schloss Dagstuhl - Leibniz-Zentrum f{\"{u}}r Informatik, 2023.

\end{thebibliography}

\end{document}